\algrenewcommand\algorithmicrequire{\textbf{Input:}}
\algrenewcommand\algorithmicensure{\textbf{Output:}}
\theoremstyle{thmstyleone}%
\newtheorem{theorem}{Theorem}
\newtheorem{lemma}{Lemma}
\theoremstyle{thmstyletwo}%
\newtheorem{remark}{Remark}%
\theoremstyle{thmstylethree}%
\newtheorem{definition}{Definition}%
\DeclareMathOperator{\supp}{supp}
\DeclareMathOperator{\diag}{diag}
\newcommand{\nix}[1]{}
\begin{document}

\title[Color codes with domino twists: Construction, logical measurements, and computation]{Color codes with domino twists: Construction, logical measurements, and computation}


\author*[1]{\fnm{Manoj G.} \sur{Gowda}}\email{mgowda.iitk@gmail.com}



\affil*[1]{\orgdiv{Department of Electrical Engineering}, \orgname{Indian Institute of Technology Madras}, \orgaddress{\street{}\city{Chennai} \postcode{600036}, \state{} \country{India}}}




\abstract{ 
Twists are defects that are used to encode and process quantum information in topological codes like surface and color codes. 
Color codes can host three basic types of twists viz., charge-permuting, color-permuting and domino twists.
In this paper, we study domino twists from the viewpoint of computation.
Specifically, we give a systematic construction for domino twists in qubit color codes.
We also present protocols for measurement of logical qubits.
Finally, we show that all Clifford gates can be implemented by braiding twists.}

\keywords{Color codes, twists, braiding, logical measurements}



\maketitle

\section{Introduction}

Color codes are a class of topological codes that were introduced with the objective of performing transversal gates~\cite{Bombin2006}.
Among the various ways of performing computations in color codes are using defects like holes~\cite{Fowler2011,Landahl2011}, twists~~\cite{Kesselring2018,GowdaSarvepalli2021} and lattice surgery~\cite{Landahl2014}.
Twists are a form of defects in the lattice that permute anyons.
Informally, anyons are syndromes (excitations) resulting from the violation of a check operator.
Two labels are needed to completely specify an anyon in color codes viz., charge and color labels.
Color codes support a rich set of anyons in comparison to surface codes.
As a consequence, more anyon permutations and hence more types of twists can be expected in color codes.
Three basic types of twists can be introduced in color codes viz., charge-permuting, color-permuting and domino twists~\cite{Kesselring2018}.
Charge- and color-permuting twists permute the charge and color labels of anyons respectively.
Unlike the charge- or color-permuting twists, domino twists permute a charge label with a color label.
This makes them interesting from a theoretical viewpoint.
The recent advances in the realization of topological codes on hardware~\cite{Mukai2020,Andersen2020,Ryan-Anderson2021,Chen2021,Erhard2021,Marques2022,Hilder2022} provide some impetus to our study of domino twists in color codes.\\

Use of twists for encoding and processing quantum information was first proposed in Ref.~\cite{Kitaev2003}.
Prior studies on twists in surface codes can be found in Refs.~\cite{Bombin2010,Hastings2015,Brown2017,Lavasani2018,GowdaSarvepalli2020}.
Twists in color codes were first studied in Ref.~\cite{Kesselring2018} where all possible twist types in color codes were documented.
Also, the basic twist types with which all other twists can be realized were presented.
Building upon this work, the authors in Ref.~\cite{GowdaSarvepalli2021} presented systematic construction of charge-permuting and color-permuting twists.
Implementing encoded Clifford gates using charge-permuting and color-permuting twists was also explored.
Quantum computation with twists in qudit color codes that permute both charge and color labels of anyons was studied in Ref.~\cite{GowdaSarvepalli2022}.
Some other related work on twists can be found in Refs.~\cite{Bombin2011, Yu-Wen2012}.
Protocols for logical measurements in Calderbank-Shor-Steane (CSS) quantum Low-Density Parity Check (LDPC) codes were proposed in Ref.~\cite{Cohen2022}.
The approach of employing number states and fermion operators for analyzing gates was carried out in Refs.~\cite{Bravyi2002,Bravyi2006} in the context of local Fermionic modes and fractional quantum Hall states respectively.
Recently, this approach was used in the context of superconducting qubits in Ref.~\cite{Scheppe2022}.\\

\noindent \emph{Contribution.}
While Ref.~\cite{Kesselring2018} was the first to introduce domino twists, their computational aspects were not studied.
Computational aspects of charge-permuting and color-permuting twists were studied in Ref.~\cite{GowdaSarvepalli2021}.
A similar study in the case of domino twists seems to be lacking in literature.
This motivates our study of domino twists in color codes from a computational viewpoint.

In this paper, we focus on domino twists in color codes.
Charge-permuting and color-permuting twists are introduced in the lattice by stabilizer modification and lattice modification respectively~\cite{GowdaSarvepalli2021}.
Recall that domino twists permute a color label with a charge label.
Hence, their introduction in lattices is more challenging.
We introduce domino twists in the lattice by suitably modifying the lattice and a careful choice of stabilizer generators.

Many algorithms require qubits to be initialized in specific states like $ \vert 0\rangle $ or $ \vert +\rangle $.
This is done by a logical $Z$ or $X$ measurement respectively.
If the measurement outcome is $-1$, suitable correction Pauli operator is applied.
When the qubits are encoded using a quantum error correcting code, one has to measure the corresponding logical $Z$ or logical $X$ operators.
Logical operators have weight $O(d)$ where $d$ is the minimum code distance.
Good quantum error correcting codes have a large minimum distance.
Therefore, performing a logical measurement by entangling qubits in its support with an ancilla qubit is quite involved and will not be fault-tolerant.
A protocol to measure logical qubits fault-tolerantly in CSS quantum codes was proposed recently in Ref.~\cite{Cohen2022}.
However, color codes with domino twists are non-CSS codes and the protocols presented in Ref.~\cite{Cohen2022} cannot be directly applied.
We present protocols for logical measurements in color codes with domino twists.

Twists carry unpaired Majorana zero modes (MZM)~\cite{Zheng2015} and hence are realizations of non-Abelian anyons~\cite{Bombin2010,Yu-Wen2012}.
Two MZMs can be combined to get a Fermion mode~\cite{Sarma2015}.
Consequently, a pair of twists can be treated as a Fermionic mode.
Previous works~\cite{Brown2017,GowdaSarvepalli2020,GowdaSarvepalli2021} analyze braiding protocols by tracking the evolution of logical operators during braiding.
We analyze the braiding protocols using Fermionic operators and number states.

Our contributions are as follows:
\begin{compactenum}[i)]
\item We expand on the work in Ref.~\cite{Kesselring2018} and present a systematic construction of domino twists in an arbitrary trivalent and three-face-colorable lattice (also called a $2$-colex). The construction is summarized in Theorem~\ref{thm:construction}.
\item We adapt the protocol for logical measurements presented in Ref.~\cite{Cohen2022} to the case of color codes with domino twists. 
This result is summarized in Theorem~\ref{thm:logical-x-measurement}. 
\item We present protocols to implement logical Clifford gates using domino twists.
In doing so, we map a pair of twists created together to a Fermionic mode and analyze twist braiding using Fermion operators and number states.
Theorem~\ref{thm:encoded-Cliffords} summarizes the results on logical Clifford gates.
\end{compactenum}

\vspace{2mm}

\noindent \emph{Organization.} In Section~\ref{sec:background}, we briefly discuss color codes and Pauli operator to string mapping. 
Systematic construction of color codes with domino twists is presented in Section~\ref{sec:domino}.
Logical measurements are discussed in Section~\ref{sec:init-and-meas}.
In Section~\ref{sec:clifford-gates}, we present mapping of color codes with domino twists to Ising anyon model and analyze the braiding rules for implementing Clifford gates using Fermion operator approach.

\section{Background}
\label{sec:background}
In this section we discuss color codes, anyons in color codes and their permutation.
We assume that the reader is familiar with the stabilizer formalism~\cite{Nielsen2010,LidarBrun2013}.
A color code is defined on a $2$-colex which is a trivalent and three face colorable lattice.
Qubits are placed on the vertices and two stabilizer generators are defined on every face $f$:
\begin{equation}
    B_f^X = \prod_{v \in V(f)} X_v, \text{ and  } B_f^Z = \prod_{v \in V(f)} Z_v,
    \label{eqn:color-code-stabilizers}
\end{equation}
where $V(f)$ denotes the set of vertices of a face $f$.

\begin{figure}[htb]
    \centering
    \begin{subfigure}{.45\textwidth}
        \centering
        \includegraphics[scale = .7]{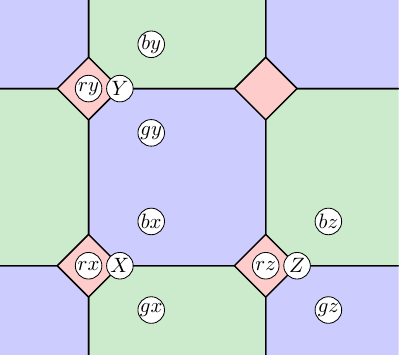}
        \subcaption{}
        \label{fig:anyons}     
    \end{subfigure}
    ~
    \begin{subfigure}{.45\textwidth}
        \centering
        \includegraphics[scale = .7]{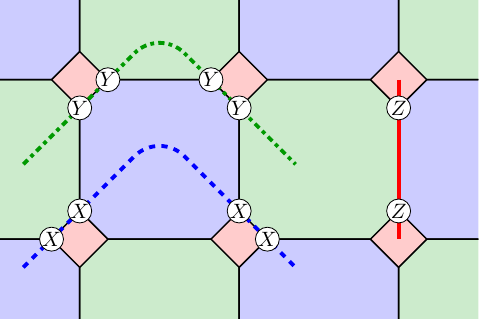}
        \subcaption{}
        \label{fig:strings}
    \end{subfigure}
    \caption{Anyons and strings. (a) Anyons created as result of $X$, $Y$, ad $Z$ errors are shown. Note that a qubit participates in three stabilizers. As a result, an error creates three anyons (syndromes). (b) Strings representing $X$, $Y$, and $Z$ operators are dashed, dash-dotted, and continuous respectively. Note that the end points of a strings are in faces that host an anyon.}
    \label{fig:anyons-strings}
\end{figure}

Suppose that a qubit undergoes $Z$ error.
This error violates an $X$-type stabilizer on faces incident on the qubit (vertex).
As a result, three syndromes, $rx$, $gx$ and $bx$ are created on red, green and blue faces respectively, see Fig.~\ref{fig:anyons}.
The complete set of syndromes in color code is given in Table~\ref{tab:anyons}~\cite{Kesselring2018} and illustrated in Fig.~\ref{fig:anyons}.
\begin{table}[htb]
    \centering
     \caption{Syndromes in color code}
    \begin{tabular}{ccccccccccc}
        \hline 
        \hline
         $rx$ & && & &$ry$ &&&&& $rz$ \\
         $gx$ &&& & & $gy$ &&&&& $gz$\\
         $bx$ &&& & & $by$ &&&&& $bz$ \\
         \hline 
         \hline
    \end{tabular}
    \label{tab:anyons}
\end{table}
Two labels are used to specify a syndrome viz., color labels $r$, $g$ and $b$ indicating red, green and blue faces respectively and Pauli labels $x$, $y$ and $z$ to indicate the type of stabilizer violated. 
Note that a Pauli $Y$ error violates both $X$ and $Z$ type stabilizers defined on a face.
Syndromes are also called anyons sometimes.
We use these terms interchangeably.

\begin{figure}[htb]
    \centering
    \includegraphics[scale = .75]{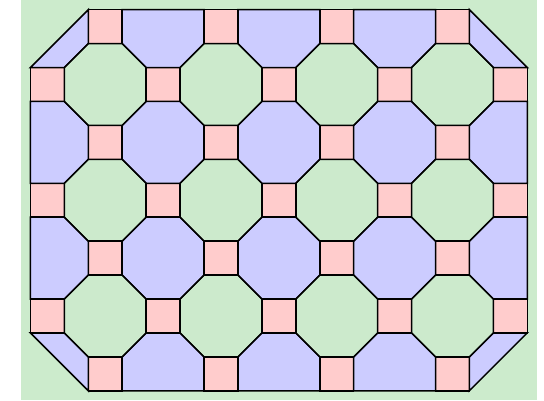}
    \caption{Square-octagon lattice with the unbounded green face. Note that all vertices are trivalent. A stabilizer-counting argument shows that the code defined on the lattice does not encode any logical qubits.}
    \label{fig:basic-lattice}
\end{figure}

An anyon present on a face can be moved to another face by applying suitable Pauli operator.
In doing so, the anyon traces out a path in the lattice.
Such paths are called strings.
In color codes, strings are a collection of edges of the same color in the lattice, see Fig.~\ref{fig:strings}.
A Pauli operator can be associated with a string.
The Pauli operator applied to each qubit to move the anyon around is the Pauli operator associated with the strings.
A string can start and end in the same face.
In such cases, the anyons are created, moved around and annihilated.
Pauli operators corresponding to such strings are either stabilizers or logical operators.
On the hand, strings that start and end in different faces only move the anyon apart.
The operators corresponding to such strings form the set of detectable errors.
For a detailed discussion on strings, see Ref.~\cite{GowdaSarvepalli2021}.
In this paper, we consider $2$-colexes with boundary. 
We assume that the unbounded face on the boundary is of green color as shown in Fig.~\ref{fig:basic-lattice}.

\section{Construction}
\label{sec:domino}

In this section, we discuss the creation and movement of domino twists in arbitrary $2$-colexes. 
We generalize the construction of domino twists in Ref.~\cite{Kesselring2018} to an arbitrary $2$-colex. 
Before moving on to construction, we need a notion of how far apart two domino twists are.
This is formalized in the definition below.\\

\begin{definition}[Domino twist separation]
Separation between a domino twist pair is the length of the smallest path between the vertices corresponding to twists in the dual lattice.
\end{definition}

\begin{figure}[htb]
    \centering
    \begin{subfigure}{0.45\textwidth}
        \centering
        \includegraphics[scale = .95]{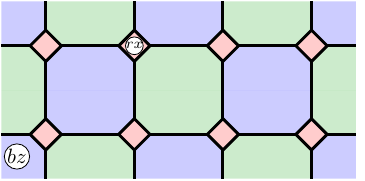}
        \subcaption{}
        \label{fig:domino-intuition-1}
    \end{subfigure}
    ~
    \begin{subfigure}{0.45\textwidth}
        \centering
        \includegraphics[scale = .95]{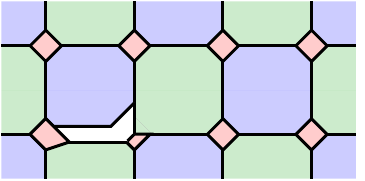}
        \subcaption{}
        \label{fig:domino-intuition-2}
    \end{subfigure}
    ~
    \begin{subfigure}{0.45\textwidth}
        \centering
        \includegraphics[scale = .95]{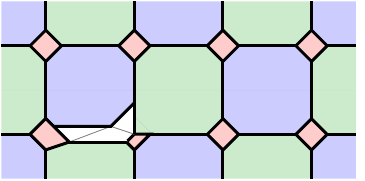}
        \subcaption{}
        \label{fig:domino-intuition-5}
    \end{subfigure}
    ~
    \begin{subfigure}{0.45\textwidth}
        \centering
        \includegraphics[scale = .95]{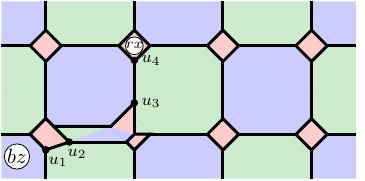}
        \subcaption{}
        \label{fig:domino-intuition-6}
    \end{subfigure}
    \caption{Intuition. (a) The desired transformation to be achieved is $bz \longrightarrow rx$. Note that this transformation cannot be achieved without modifying the lattice. (b) Introduce a face as shown. This transformation leaves a red pentagon face and a green nonagon face. (c) Partition the face virtually as shown. This creates three half-bricks. (d) Color the half bricks as shown. With a suitable choice of stabilizers on twists and a combination of two half-bricks, the desired transformation can be achieved.}
    \label{fig:domino-intuition}
\end{figure}

\noindent \emph{\textbf{Intuition} --} Twists are introduced in the lattice by modifying stabilizers or the structure of the lattice~\cite{GowdaSarvepalli2021} or both lattice and stabilizers as in the case of charge-and-color-permuting twists~\cite{GowdaSarvepalli2022}.
Recall that domino twists permute the charge label of an anyon with its color label~\cite{Kesselring2018}, for instance, $r \longleftrightarrow z$, $g \longleftrightarrow y$, $b \longleftrightarrow x$.
This specific permutation leaves the anyons $rz$, $gy$ and $bx$ unaltered.
Domino twists cannot be introduced in the lattice by modifying stabilizers or the structure of the lattice or both.

Suppose we want the permutation $r \longleftrightarrow z$, $g \longleftrightarrow y$, and $b \longleftrightarrow x$. 
For instance, this permutes $bz$ to $rx$, see Fig.~\ref{fig:domino-intuition-1}.
Observe that in a $2$-colex, only faces of the same color are connected by an edge.
Therefore, syndromes can be moved only between faces of the same color.
This makes it difficult to achieve the transformation $bz \longrightarrow rx$. 
Consider the modification of the lattice as given in Fig.~\ref{fig:domino-intuition-2}.
An additional face is introduced and as a result red square face is now a pentagon and the green face is a nonagon.
Partition the face virtually as shown  in Fig.~\ref{fig:domino-intuition-5} and color the partitions as shown in Fig.~\ref{fig:domino-intuition-6}. 
Each partition is called a half-brick.
A stabilizer is defined on the combination of two adjacent half-bricks.
A half-brick can participate in the definition of more than one stabilizer.
Now suppose that we want to move the anyon $bz$ across the domain wall. 
There is now a path with edges $(u_1,u_2)$ and $(u_3,u_4)$ between the blue and red faces.
With suitable choice of stabilizers on twists and combination of a pair of bricks and Pauli operators on vertices $u_i$, $1 \le i \le 4$, one can transform the anyon $bz$ to $rx$.

\begin{figure}[htb]
    \centering
    \begin{subfigure}{.45\textwidth}
       \centering
       \includegraphics[width = 5cm,height = 1.25cm]{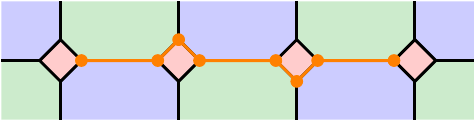}
       \subcaption{}
       \label{fig:dt-creation-1}
    \end{subfigure}
    ~
    \begin{subfigure}{.45\textwidth}
       \centering
       \includegraphics[width = 5cm,height = 1.25cm]{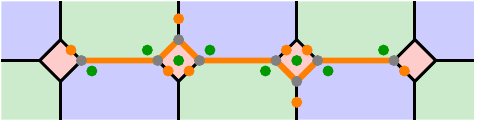}
       \subcaption{}
       \label{fig:dt-creation-2-1}
    \end{subfigure}
    ~
    \begin{subfigure}{.45\textwidth}
       \centering
       \includegraphics[width = 5cm,height = 1.25cm]{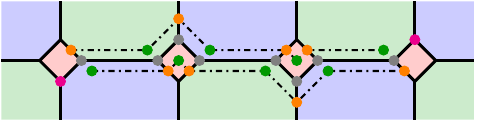}
       \subcaption{}
       \label{fig:dt-creation-2-2}
    \end{subfigure}
    ~
        \begin{subfigure}{.45\textwidth}
       \centering
       \includegraphics[width = 5cm,height = 1.25cm]{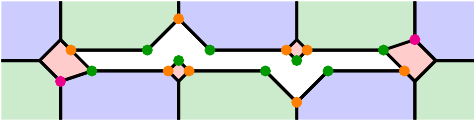}
       \subcaption{}
       \label{fig:dt-creation-3}
    \end{subfigure}
    ~
        \begin{subfigure}{.45\textwidth}
       \centering
       \includegraphics[width = 5cm,height = 1.25cm]{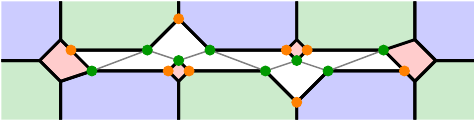}
       \subcaption{}
       \label{fig:dt-creation-4}
    \end{subfigure}
    ~
    \begin{subfigure}{.45\textwidth}
       \centering
       \includegraphics[width = 5cm,height = 1.25cm]{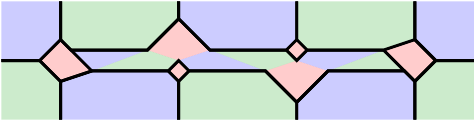}
       \subcaption{}
       \label{fig:dt-creation-5}
    \end{subfigure}
    \caption{Illustration of domino twist creation in the square-octagon lattice. (a) The path $\pi$ chosen for creation of domino twists. This path does not have more than two edges from a face. (b) Vertex doubling along the path $\pi$ chosen in Fig.~\ref{fig:dt-creation-1}. (c) Paths are created by connecting vertices on either side of the path $\pi$ and deleting the vertices along $\pi$. (d) Remove the vertices along the path $\pi$ and connect the ends of the two paths created in the previous step. Also, connect the two-valent vertices, shown in magenta, to the nearest vertex $v_{\ast}$. (e) Create half-bricks by connecting vertices as shown. (f) Color the half-bricks as shown. The red pentagon faces are twists.}
    \label{fig:dt-creation}
\end{figure}

\subsection{Creation and movement}
Systematic construction of domino twists in a $2$-colex is given in Algorithm~\ref{alg:domino-twists-creation}.
The algorithms for twist creation and movement are separated. 
The reason being, once a pair of twists is created but not moved apart, the original lattice changes and the set of edges chosen for twist creation is modified.
The algorithm for twist creation is illustrated in Fig.~\ref{fig:dt-creation}.
We use the following notation to distinguish the vertices created by doubling in Algorithm~\ref{alg:domino-twists-creation}.
Vertices introduced on edges and faces are denoted as  $v_{\bullet}$ and $v_{\star}$ respectively. 
The corresponding sets are denoted by $\mathsf{V}_{\bullet}$ and $\mathsf{V}_{\ast}$ respectively.
We also denote the vertex introduced on an edge (face) due to a vertex $u$ as $v_{\bullet}(u)$ ($v_{\ast}(u)$).
We use the term virtual path in Algorithm~\ref{alg:domino-twists-creation} because the edges along this path are not actual edges of the lattice.
They only serve to create half-bricks.

\begin{algorithm}[H]
    \caption{Algorithm to create $k$ pairs of domino twists in a $2$-colex.}
    \label{alg:domino-twists-creation}
    \begin{flushleft}
    \algorithmicrequire{ $2$-colex, number of twist pairs $k \ge 1$, separation between twists $\ell \ge 1$.}\\
    \algorithmicensure{ Trivalent lattice with $k$ pairs of twists.}
    \end{flushleft}
\begin{algorithmic}[1]
       \State{Select $k$ continuous paths $\pi_i$ in the lattice such that no two paths overlap and that a face contributes no more than two edges to a path, see Fig.~\ref{fig:dt-creation-1}.}
       \For{$i = 1$ to $k$}
       \State{Double the vertices along the path $\pi_i$ as shown in Fig.~\ref{fig:dt-creation-2-1} such that one of the new vertices is on an edge that is not in the path $\pi_i$ and the other in a face. No two doubled vertices should be in the same face.}
       \State{Connect the new vertices so that the path formed by new edges is parallel to the chosen path $\pi_i$ and has vertices alternating between $\mathsf{V}_{\ast}$ and $\mathsf{V}_{\bullet}$, see Fig.~\ref{fig:dt-creation-2-2}. 
        \State{Delete the vertices on the path $\pi_i$. This results in two-valent vertices, see Fig.~\ref{fig:dt-creation-2-2}. Connect the two-valent vertices to the nearest doubled vertex $v_{\ast}$, see Fig.~\ref{fig:dt-creation-3}.}
       \State Connect the vertices $v_{\ast}(u)$ and $v_{\bullet}(u)$ corresponding to a terminal vertex $u$ of the path $\pi_i$, see Fig.~\ref{fig:dt-creation-3}.} 
        \State{Connect the vertices in $\mathsf{V}_{\ast}$ by a virtual edge, see Fig.~\ref{fig:dt-creation-4}.}
        \Comment{This creates half-bricks.} 
       \State{Color of a half-brick is the color of the face (which is not a twist) with which it shares exactly a vertex, see Fig.~\ref{fig:dt-creation-5}. }
       \EndFor
    \end{algorithmic}
\end{algorithm}

\noindent \emph{Twist movement.}
The algorithm for twist movement is given in Algorithm~\ref{alg:domino-twists-movement} and is illustrated in Fig.~\ref{fig:dt-shift}.

\begin{figure}[htb]
    \centering
    \begin{subfigure}{.45\textwidth}
        \centering
        \includegraphics[height = 3cm, width = 6cm]{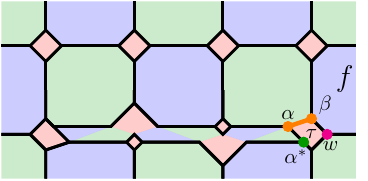}
        \subcaption{}
        \label{fig:dt-shift-1}
    \end{subfigure}
    ~
    \begin{subfigure}{.45\textwidth}
        \centering
        \includegraphics[height = 3cm, width = 6cm]{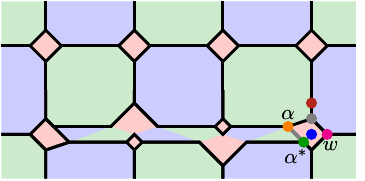}
        \subcaption{}
        \label{fig:dt-shift-2}
    \end{subfigure}
    ~
    \begin{subfigure}{.45\textwidth}
        \centering
        \includegraphics[height = 3cm, width = 6cm]{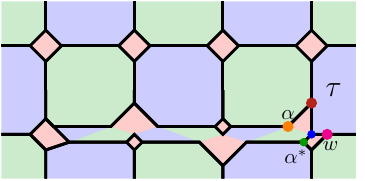}
        \subcaption{}
        \label{fig:dt-shift-3}
    \end{subfigure}
     ~
    \begin{subfigure}{.45\textwidth}
        \centering
        \includegraphics[height = 3cm, width = 6cm]{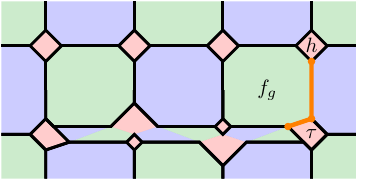}
        \subcaption{}
        \label{fig:dt-shift-4}
    \end{subfigure}
     ~
    \begin{subfigure}{.45\textwidth}
        \centering
        \includegraphics[height = 3cm, width = 6cm]{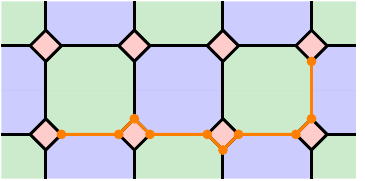}
        \subcaption{}
        \label{fig:dt-shift-5}
    \end{subfigure}
    \caption{Illustration of twist movement. (a) The face to which twist $\tau$ should be moved is marked $f$. The edge incident on the face $f$ which is used for vertex doubling is shown in orange. (b) Vertex of the chosen edge $\beta$ is doubled. The vertices $v_{\bullet}(\beta)$ and $v_{\ast}(\beta)$ are shown as red and blue vertices respectively. The edge $(\alpha, \alpha^{\ast})$is removed. (c) The new vertices introduced are connected as shown. Also, the two-valent vertices resulting from vertex deletion are also connected to the new vertices. Virtual edge is introduced and the resulting half-brick is colored according to Algorithm~\ref{alg:domino-twists-creation}. (d) The path along which the twist $\tau$ cannot be moved. (e) Moving twist to face $h$ is equivalent to applying Algorithm~\ref{alg:domino-twists-creation} to the path shown. This path violates the condition in Algorithm~\ref{alg:domino-twists-creation} that a path should not have more than two common edges with a face.}
    \label{fig:dt-shift}
\end{figure}

\begin{algorithm}[htb]
    \caption{Algorithm to move domino twists.}
    \label{alg:domino-twists-movement}
    \begin{flushleft}
    \algorithmicrequire{ Lattice with at least a pair of domino twists separated by distance $\ell$.}\\
    \algorithmicensure{ Lattice with a pair of domino twists separated by distance $\ell + 1$.}
    \end{flushleft}
    
\begin{algorithmic}[1]
     \State{Choose an edge $e = (\alpha,\beta)$, such that the edge $e$ bounds the twist face, vertex $\alpha$ is incident on the domain wall, and the vertex $\beta$ is common to the twist and the face to which twist is to be moved, see Fig.~\ref{fig:dt-shift-1}.}
     \State{Double the vertex $\beta$ such that $v_{\ast}(\beta)$ lies in the twist face. Remove the edge $(\alpha, \alpha^{\ast})$ and delete the vertex $\beta$, see Fig.~\ref{fig:dt-shift-2}. }
    \State{Introduce new edges $(\alpha,v_{\bullet}(\beta))$,$(v_{\bullet}(\beta),v_{\ast}(\beta))$, $(\alpha^{\ast},v_{\ast}(\beta))$, and $(v_{\ast}(\beta),w)$, see Fig.~\ref{fig:dt-shift-3}.}
    \State{Introduce virtual edge $(\alpha,v_{\ast}(\beta))$ and color the resulting half-brick according to Algorithm~\ref{alg:domino-twists-creation}, see Fig.~\ref{fig:dt-shift-3}.}
    \end{algorithmic}
\end{algorithm}

\begin{remark}
    Note that the twist $\tau$ cannot be moved to the face $h$ shown in Fig.~\ref{fig:dt-shift-4}.
    Moving twist $t$ to face $h$ is equivalent to applying Algorithm~\ref{alg:domino-twists-creation} to the path shown in Fig.~\ref{fig:dt-shift-5}. Note that this path violates the condition in statement 1 of Algorithm~\ref{alg:domino-twists-creation} as the face $f_g$ shares three common edges with the path.
\end{remark}

\subsection{Stabilizers}
Three types of faces can be identified in a color code lattice with domino twists: normal faces, twists, and bricks.

\vspace{2mm}

\noindent \emph{Normal faces}. We call the faces with even number of edges as normal faces.
Two stabilizers are defined on a normal face $f$:
\begin{equation}
    B_{f}^X = \prod_{v \in V(f) }X_v, \text{ and }  B_{f}^Z = \prod_{v \in V(f) }Z_v.
    \label{eqn:normal-stabilizers}
\end{equation}

\begin{figure}[htb]
    \centering
    \begin{subfigure}{.225\textwidth}
        \centering
        \includegraphics[scale = 1.05]{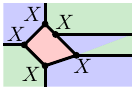}
        \subcaption{}
        \label{fig:domino-twist-stabilizers-Z}
    \end{subfigure}
    ~
    \begin{subfigure}{.225\textwidth}
        \centering
        \includegraphics[scale = 1.05]{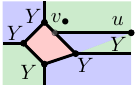}
        \subcaption{}
        \label{fig:domino-twist-stabilizers-Y}
    \end{subfigure}
    ~
    \begin{subfigure}{.225\textwidth}
        \centering
        \includegraphics[scale = .75]{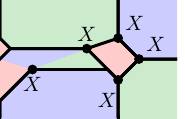}
        \subcaption{}
        \label{fig:domino-stabilizer-x-2}
    \end{subfigure}
    ~
    \begin{subfigure}{.225\textwidth}
        \centering
        \includegraphics[scale = .75]{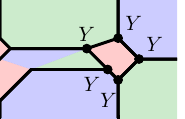}
        \subcaption{}
        \label{fig:domino-stabilizer-y-2}
    \end{subfigure}
    ~
    \begin{subfigure}{.9\textwidth}
        \centering
        \includegraphics[scale = 1]{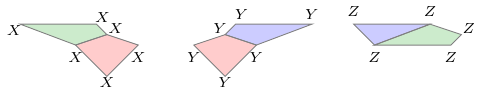}
        \subcaption{}
        \label{fig:domino-bricks-all}
    \end{subfigure}
    \caption{Stabilizer assignment. (a) $X$-type stabilizer defined on domino with full support on twist face. (b) $Y$-type stabilizer defined on domino twist with partial support on twist face.  Note that we cannot define two stabilizers as they will anticommute. Therefore, one of the stabilizers, namely $X$, is defined on the vertices of the twist and the other has partial support on the twist face. To be precise, the size of support for $Y$ stabilizer is an even number, thus resulting in commuting stabilizers. Stabilizer generators on the other twist are shown in (c) and (d). (e) Domino brick stabilizers for the $D_1$ twist.}
    \label{fig:domino-twist-stabilizers}
\end{figure}

\noindent \emph{Domino bricks}.
In order to get the desired transformation of anyons, we need to define some additional stabilizers.
These stabilizers are not defined on faces but rather on domino bricks.
There are three types of domino bricks depending on the color of the component half-bricks: red-green, red-blue, and blue-green.
The type of stabilizers defined depends on the type of half bricks combined~\cite{Kesselring2018}.
An $X$, $Y$ or $Z$ type stabilizer is assigned to the domino bricks depending on the kind of anyon transformation desired.
Consequently, there are six choices for stabilizers for domino bricks and are given in Table~\ref{tab:stab}.

In this paper, we consider $D_1$ twists whereas $D_6$ twists were studied in Ref.~\cite{Kesselring2018}.
The charge-color permutation effected by the $D_1$ twist is as follows: $r\longleftrightarrow z,  g \longleftrightarrow y, b \longleftrightarrow x$.

\begin{table}[htb]
\caption{Domino bricks stabilizers}
\centering
    \begin{tabular}{p{3cm}  p{.5cm} p{.5cm} p{.5cm} p{.5cm} p{.5cm} p{.5cm}}
        \hline
        \hline
         Color pair/Twist type  & $D_1$ & $D_2$ &$D_3$ &$D_4$ & $D_5$ &$D_6$\\
        \hline
        red, green & $X$ & $X$ &$Y$ & $Y$ & $Z$ &$Z$  \\
        \hline
        red, blue & $Y$ & $Z$& $X$ & $Z$& $X$ &$Y$ \\
        \hline
        green, blue & $Z$ & $Y$ & $Z$ & $X$ &  $Y$& $X$\\
        \hline
        \hline
    \end{tabular}
    \label{tab:stab}
\end{table}

If a color label $c$ is permuted with a Pauli label $p$, the domino bricks having half-bricks of color $c$ cannot have stabilizers of $p$-type.
For instance, if the required transformations are $r \longleftrightarrow z$, $g \longleftrightarrow y$ and $b \longleftrightarrow x$ then a domino brick with a red half-brick as a component cannot have $Z$-type stabilizer.
The reason is as follows:
let $(u,v)$ be the edge incident on a red face with an $rx$ anyon and a red half-brick.
The operator $Z_u Z_v$ annihilates the anyon $rx$ on the red face without creating it on the half-brick.
Similarly, one can argue that green and blue half bricks cannot have $Y$- and $X$-type stabilizers.
Hence, the consistent stabilizer assignment is red-green domino brick is assigned $X$-type stabilizer, red-blue and blue-green are assigned $Y$- and $Z$-type stabilizers respectively:
\begin{equation}
    B_{rg}^d = \!\!\!\!\!\! \prod_{v \in \supp\{ d \} } \!\!\!\!\!\! X_v, \quad B_{rb}^d = \!\!\!\!\!\! \prod_{v \in \supp\{ d \} } \!\!\!\!\!\!  Y_v, \quad B_{bg}^d = \!\!\!\!\!\! \prod_{v \in \supp\{ d \} }   \!\!\!\!\!\! Z_v.
    \label{eqn:domino-brick-stabilizers}
\end{equation}
where $d$ indicates domino brick. 
The domino brick stabilizer generators are shown in Fig.~\ref{fig:domino-bricks-all}.
\newline

\noindent \emph{Twists}.
Two stabilizer generators are defined on twist faces.
For one of the stabilizer generators, vertices of the twist face form the support and for the other stabilizer generator an even number of vertices are taken from twist face and the other vertex is the neighbor of the twist face vertex not in the support, see Figs.~\ref{fig:domino-twist-stabilizers-Y} and~\ref{fig:domino-stabilizer-y-2}.
The Pauli operator type for stabilizer generator with support on all vertices of twist face is decided by the domino brick stabilizer that shares a common vertex with twist, see Figs.~\ref{fig:domino-twist-stabilizers-Z} and ~\ref{fig:domino-stabilizer-x-2}.
The Pauli operator type for stabilizer generator with partial support on vertices of twist face is decided by the domino brick stabilizer that has the vertex $u$ shown in Fig.~\ref{fig:domino-twist-stabilizers-Y} as common vertex (in this case red-blue domino brick).
Stabilizers for the twist face shown in Figs.~\ref{fig:domino-twist-stabilizers-Z} to~\ref{fig:domino-stabilizer-y-2} respectively are as below:
\begin{equation}
    B_{\tau,X}^{(1)} = \!\!\! \prod_{v \in V(\tau) } \!\!\! X_v, \text{  } B_{\tau,Y}^{(1)} = \!\!\! \prod_{v \in U(\tau) } \!\!\! Y_v, \text{  } B_{\tau,X}^{(2)} = \!\!\! \prod_{v \in U(\tau) }\!\!\! X_v, \text{  } B_{\tau,Y}^{(2)} = \!\!\! \prod_{v \in V(\tau) } \!\!\! Y_v,
    \label{eqn:twist-stabilizers}
\end{equation}
where $U(\tau) = \{ V(\tau) \setminus v_{\bullet} \} \cup \{u\}$ and $u = N(v_{\bullet}) \setminus V(\tau)$, see Fig.~\ref{fig:domino-twist-stabilizers}.
Note that the $X$ and $Y$ stabilizer definitions are reversed for the other twist face and are shown in Figs.~\ref{fig:domino-stabilizer-x-2} and ~\ref{fig:domino-stabilizer-y-2}.\\

\noindent \emph{Stabilizer commutation.}
The following pairs share exactly an edge: two adjacent normal faces, a normal face and a twist, a twist and a domino brick, and two adjacent domino bricks.
Therefore, the corresponding stabilizer generators commute.
A twist face and a domino brick can have a common vertex.
By construction, the domino brick and the twist have the same Pauli operator on the common vertex and therefore the stabilizer generators defined on them commute.
The stabilizers $B_{\tau,X}^{(i)}$ and $B_{\tau,Y}^{(i)}$ intersect exactly at even number of vertices and therefore commute.
Hence, the operators given in Equations~\eqref{eqn:normal-stabilizers},~\eqref{eqn:domino-brick-stabilizers},~\eqref{eqn:twist-stabilizers} are valid stabilizer generators.
The stabilizer group $\mathcal{S}$ is specified as below:
\begin{equation}
    \mathcal{S} = \left\langle B_f^X,B_f^Z, B_{\tau,P}^{(1)}, B_{\tau,P}^{(2)}, B_{c c^\prime}^d \vert P \in \{ X,Y \}, c,c^\prime \in \{r,g,b\},  c \neq c^\prime \right\rangle.
\end{equation}
The action of domino twists on anyons $gx$ and $bx$ as they cross the domain wall is shown in Fig.~\ref{fig:twist-action}.
\begin{figure}[htb]
    \centering
    \includegraphics{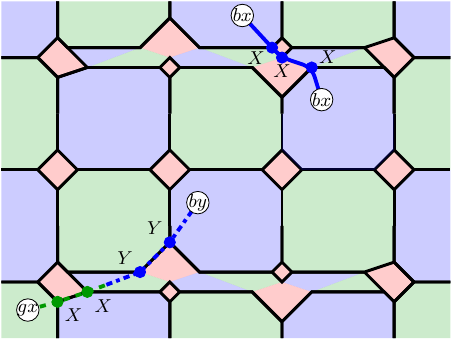}
    \caption{Action of domino twist: the anyon $gx$ is permuted to $by$ upon crossing the domain wall whereas the anyon $bx$ is left invariant.}
    \label{fig:twist-action}
\end{figure}

\nix{Stabilizer generators defined in the preceding paragraphs commute as any two adjacent faces or a domino brick and a face share exactly an edge. 
Also, any two domino bricks share even number of vertices.
Hence commutation follows.
}

\noindent \emph{Stabilizer constraints.} Stabilizers  satisfy the following constraints:
\begin{subequations}
    \begin{eqnarray}
        \prod_{f \in \mathsf{F}_{bg}} B_f^Z \prod_{d \in \mathsf{B}_{bg}} B_{bg}^d &=& I,\\
        \prod_{f \in \mathsf{F}_{rg}} B_f^X \prod_{d \in \mathsf{B}_{rg}} B_{rg}^d \prod_{\tau} B_{\tau,X} &=& I,\\
        \prod_{f \in \mathsf{F}_{rb}} B_f^X B_f^Z \prod_{d \in \mathsf{B}_{rb}} B_{rb}^d \prod_{\tau} B_{\tau,Y} &=& I,     
    \end{eqnarray}
     \label{eq:stab-dep}
\end{subequations}
where $\mathsf{B}_{cc^{\prime}}$ the set of domino bricks formed by combining half-bricks of color $c$ and $c^{\prime}$, $\mathsf{F}_{cc^{\prime}} = \mathsf{F}_c \cup \mathsf{F}_{c^{\prime}}$ and $\tau$ are the twists.
Note that we include the stabilizer defined on the unbounded green face in the above product.
The above equations suggest that three stabilizers are redundant.
We choose two redundant stabilizers to be the ones defined on the unbounded green face and the other one a stabilizer on a red face.
Using this result, we derive the number of encoded qubits in a lattice with domino twists.
Note that, by construction, domino twists are always created in pairs.
As a result, the number of domino twists is always an even number. \\

\begin{lemma}[Encoded qubits]
A 2-colex with $t \ge 2$ domino twists encodes $k = \frac{t}{2} - 1$ logical qubits.
\label{lm:encoded-qubits-domino}
\end{lemma}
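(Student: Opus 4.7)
The plan is to apply the stabilizer-counting identity $k = n - \mathrm{rank}(\mathcal{S})$, where $n$ equals the number of vertices in the modified lattice and $\mathrm{rank}(\mathcal{S})$ is the number of independent stabilizer generators. Writing $F_n$, $t$, and $B$ for the numbers of normal faces, twist faces, and domino bricks respectively, the generators listed in Eqs.~\eqref{eqn:normal-stabilizers}, \eqref{eqn:domino-brick-stabilizers}, and \eqref{eqn:twist-stabilizers} total $2F_n + 2t + B$ (two per normal face, two per twist, one per brick).

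The first subgoal is to verify that the three relations in Eq.~\eqref{eq:stab-dep} exhaust the independent dependencies among these generators. This should follow by a direct argument analogous to the untwisted case, where the four trivial relations $\prod_{f\in F_c} B_f^X = \prod_v X_v$ for each color $c$, and the analogous four for $Z$, reduce to two independent $X$-relations and two independent $Z$-relations. Introducing twist stabilizers (which are neither pure $X$ nor pure $Z$) and brick stabilizers (which weld together stabilizers of different color classes) re-pairs the color classes along the three color-pair sets $F_{bg}$, $F_{rg}$, $F_{rb}$, collapsing the count from four to the three relations exhibited in Eq.~\eqref{eq:stab-dep}. Granted this, $\mathrm{rank}(\mathcal{S}) = 2F_n + 2t + B - 3$, and therefore $k = V - 2F_n - 2t - B + 3$.

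The remaining step is to reduce this expression to $t/2 - 1$ using Euler's formula for the planar lattice (with the unbounded green face included) and the trivalence relation $3V = 2E$. I would proceed by induction on the number of twist pairs. The base case is $t = 2$: starting from a zero-logical-qubit lattice like the square-octagon of Fig.~\ref{fig:basic-lattice} and applying Algorithm~\ref{alg:domino-twists-creation} once, a direct Euler count shows that the combination $V - 2F_n - B$ takes the value forcing $k = 0$. The inductive step shows that one further application of Algorithm~\ref{alg:domino-twists-creation} changes $V$, $F_n$, and $B$ in such a way that $V - 2F_n - 2t - B$ decreases by exactly one, yielding $\Delta k = 1$ per added pair.

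The main obstacle is the combinatorial accounting in the inductive step. Algorithm~\ref{alg:domino-twists-creation} applied to a path $\pi_i$ of length $\ell$ introduces a number of new vertices, normal-face modifications, and bricks that all scale with $\ell$, yet the encoded-qubit count $k$ must be a topological invariant independent of $\ell$. The key combinatorial fact, which is the crux of the argument, is that the $\ell$-dependence in $V$, $F_n$, and $B$ cancels in the combination $V - 2F_n - B$, leaving only a contribution tied to the two twist endpoints. Verifying this cancellation, either by a vertex-by-vertex local accounting along the doubled path of Fig.~\ref{fig:dt-creation-2-2}--\ref{fig:dt-creation-5} or by treating each domino brick as a virtual face in an extended Euler computation, is the technical heart of the proof.
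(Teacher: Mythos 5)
Your proposal is correct in substance and its first half coincides with the paper's own argument: the paper likewise counts two generators per (normal and twist) face plus one per domino brick, and removes exactly the three dependencies of Eq.~\eqref{eq:stab-dep} (choosing the two on the unbounded green face and one on a red face as the redundant ones). Where you diverge is in how the final arithmetic is organized. You propose an induction on twist pairs, with a base case $t=2$ and a local accounting showing $\Delta k = 1$ per added pair; the paper instead does a single global computation. Writing $\tau = t/2$ and letting $\ell$ be the virtual-path length, it notes that each twist pair leaves $\ell-1$ two-valent vertices, so the handshake count gives $2e = 3[n - \tau(\ell-1)] + 2\tau(\ell-1)$; Euler's formula then yields $f = n/2 - (\tau/2)(\ell-1) + 2$, from which $\tau$ domain-wall faces are removed, and adding the $\tau\ell$ brick stabilizers and subtracting the three dependencies gives $s = 2f + \tau\ell - 3 = n - \tau + 1$ as in Eq.~\eqref{eqn:num-ind-stabs}. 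This is precisely the cancellation of the $\ell$-dependence that you correctly identify as the crux but leave unexecuted: the $-\tau(\ell-1)$ deficit in the face count (coming from the two-valent vertices) plus the $-\tau$ domain-wall faces is exactly offset by the $+\tau\ell$ brick generators. The paper's one-shot computation buys you this cancellation for free and avoids the delicate per-step bookkeeping your induction would require; your inductive framing would also work, but you would still have to carry out essentially the same local count for the inductive step, and your plan as written defers both that and the verification that Eq.~\eqref{eq:stab-dep} exhausts the dependencies (a point the paper itself asserts rather than proves). If you pursue your version, be sure your Euler count includes both the unbounded green face and the domain-wall faces, and then removes the latter, as the paper does; omitting that adjustment shifts $s$ by $\tau$ and breaks the result.
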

\begin{proof}
    See Appendix~\ref{sec:proof-encoded-qubits}. 
\end{proof}

\begin{figure}[htb]
    \centering
    \begin{subfigure}{.45\textwidth}
        \centering
        \includegraphics[width = 5.336cm, height = 4cm]{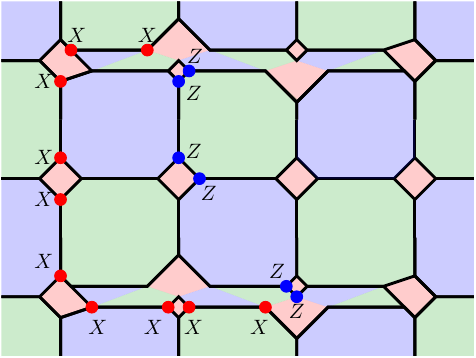}
        \subcaption{}
        \label{fig:domino-LO-X}
    \end{subfigure}
    ~
    \begin{subfigure}{.45\textwidth}
        \centering
        \includegraphics[width = 5.336cm, height = 4cm]{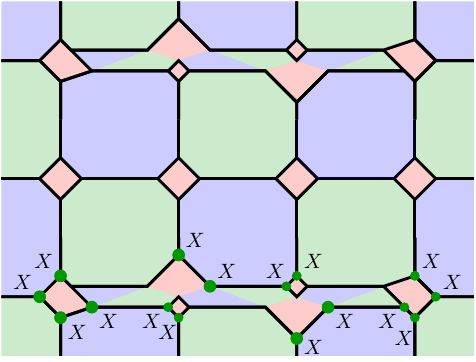}
        \subcaption{}
        \label{fig:domino-LO-Z}
    \end{subfigure}
    ~
    \begin{subfigure}{.45\textwidth}
        \centering
        \includegraphics[scale = .65]{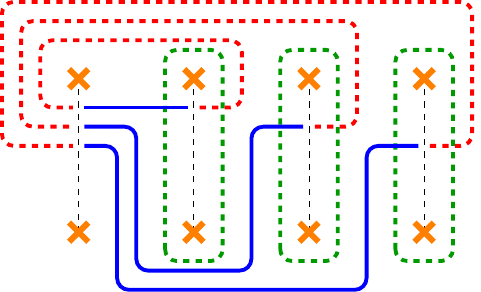}
        \subcaption{}
        \label{fig:logical-four-pairs-twists}
    \end{subfigure}
    ~
    \begin{subfigure}{.45\textwidth}
        \centering
        \includegraphics[scale = .85]{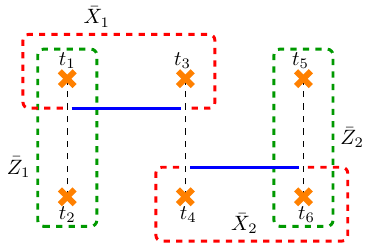}
        \subcaption{}
        \label{fig:lo-domino}
\end{subfigure}
\caption{Logical operators: (a) Logical $X$ operator depicted on lattice. (b) Logical $Z$ operator depicted on lattice.  Note that logical $Z$ and $X$ operators have different Pauli operators on a single vertex in their common support. Hence, they anticommute. (c) Logical operators for $(t/2) - 1$ logical qubits. Green strings are logical $Z$ operators and red-blue strings are logical $X$ operators. (d) Logical operators for $\lfloor t / 3 \rfloor$ encoding.}
\label{fig:domino-LO}
\end{figure}

\subsection{Logical Operators}
Logical operators commute with all stabilizer generators and are not in the stabilizer group.
For ease of representation, abstract the lattice as in Refs.~\cite{GowdaSarvepalli2020,GowdaSarvepalli2021} and consider the twists and domain walls.
Pauli operators are represented as strings.
An $X$ and $Z$-string are represented as dashed and continuous string respectively.
Logical operators on the lattice are depicted in Fig.~\ref{fig:domino-LO-X} and Fig.~\ref{fig:domino-LO-Z}.
The strings corresponding to these operators encircle a even number of twists, see Fig.~\ref{fig:logical-four-pairs-twists}.
The logical operators for $(t / 2)  - 1$ logical qubits are shown in Fig.~\ref{fig:logical-four-pairs-twists}. 
The red-blue strings correspond to logical $X$ operators.
Observe that the weight of $X$ logical operators increases with the number of logical qubits.
A similar observation was made in the context of surface codes with twists in Ref.~\cite{GowdaSarvepalli2020}.
Therefore, we choose the encoding shown in Fig.~\ref{fig:lo-domino}.
This scheme encodes lesser logical qubits while ensuring the same weight for all logical $X$ operators.
The number of logical qubits encoded are $\lfloor t / 3 \rfloor$.
Remaining $(t/2) - 1 - \lfloor t / 3 \rfloor$ logical qubits are treated as gauge qubits.
The construction is summarized in Theorem~\ref{thm:construction}.\\

\begin{theorem}[Code construction]
A 2-colex with $t$ domino twists employing $\lfloor t / 3 \rfloor$ encoding defines a subsystem code with $\lfloor t / 3 \rfloor$ logical qubits and  $ t/2 - 1 - \lfloor t / 3 \rfloor$ gauge qubits.
\label{thm:construction}
\end{theorem}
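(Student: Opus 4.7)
The plan is to reduce the theorem to Lemma~\ref{lm:encoded-qubits-domino} by a standard subsystem-code bookkeeping argument. That lemma already pins down the total number of encoded qubits in the stabilizer code $\mathcal{S}$ of Equations~\eqref{eqn:normal-stabilizers}--\eqref{eqn:twist-stabilizers} at $t/2 - 1$. What is left is to verify that the string operators depicted in Fig.~\ref{fig:lo-domino} select out $\lfloor t/3 \rfloor$ of these as bare logical qubits, and to exhibit a compatible gauge group that absorbs the remaining $t/2 - 1 - \lfloor t/3 \rfloor$ encoded qubits as gauge degrees of freedom.

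First I would count the $(X,Z)$ pairs in Fig.~\ref{fig:lo-domino}. Each logical qubit in the chosen scheme is associated with a triple of consecutive twists: the red-blue string encircles one pair of twists inside the triple and the green string encircles a different pair. Organising the $t$ twists into $\lfloor t/3 \rfloor$ such triples yields exactly $\lfloor t/3 \rfloor$ pairs of string operators. The commutation inside each triple (anticommutation for the matched $X/Z$ pair, commutation across distinct triples) follows by counting intersection parity of strings with the relevant twists, using the string-intersection rules established in Ref.~\cite{GowdaSarvepalli2021}. Next I would check that each such string commutes with every stabilizer in $\mathcal{S}$: for normal-face and domino-brick generators this is the usual argument that the string meets the generator on an even number of edges, while for twist generators one uses that the strings close around an even number of twists by construction, so the permutation action at each domain wall crossing preserves commutation.

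The last and crucial step is to identify the remaining $t/2 - 1 - \lfloor t / 3 \rfloor$ encoded qubits from Lemma~\ref{lm:encoded-qubits-domino} as gauge qubits. I would enlarge $\mathcal{S}$ to a gauge group $\mathcal{G}$ by adjoining a full symplectic pair of logical operators for each of the discarded encoded qubits. The standard subsystem count $k = n - \log_2|\mathcal{G}/Z(\mathcal{G})| - \log_2|Z(\mathcal{G})/\mathcal{S}|$ then yields $\lfloor t/3 \rfloor$ bare logical qubits and $t/2 - 1 - \lfloor t/3 \rfloor$ gauge qubits, provided that the adjoined operators commute with the chosen $\lfloor t/3 \rfloor$ bare logical operators. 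Verifying this commutation amounts to showing that the $(t/2 - 1)$-dimensional symplectic space of logical operators from Lemma~\ref{lm:encoded-qubits-domino} decomposes as an orthogonal direct sum of a $\lfloor t/3 \rfloor$-dimensional subspace (the preserved qubits) and its complement, which is a parity argument on how strings wrap around twist triples.

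\textbf{Main obstacle.} The chief difficulty lies in this last commutation step: giving a canonical, basis-independent certificate that the two families of string operators are mutually commuting so that the symplectic decomposition is clean. Everything else is either a direct invocation of Lemma~\ref{lm:encoded-qubits-domino} or a routine intersection-parity check, but the triple-based geometry of the $\lfloor t / 3 \rfloor$ encoding is what makes the orthogonality of the bare-logical block and its gauge complement go through, and that is the step I would write out in the most detail.
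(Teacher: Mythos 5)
Your proposal follows essentially the same route as the paper, which offers no separate formal proof of Theorem~\ref{thm:construction}: it simply combines Lemma~\ref{lm:encoded-qubits-domino} (giving $t/2 - 1$ encoded qubits) with the choice of the $\lfloor t/3 \rfloor$ triple-based string operators of Fig.~\ref{fig:lo-domino} as bare logical operators, and declares the remaining $t/2 - 1 - \lfloor t/3 \rfloor$ qubits to be gauge qubits. The commutation and symplectic-decomposition bookkeeping you flag as the main obstacle is left implicit in the paper, so your write-up is, if anything, more careful than the original.
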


\vspace{.15in}

\noindent This concludes our construction of domino twists in an arbitrary $2$-colex.
We now proceed to the measurement of logical qubits encoded using domino twists.

\section{Logical measurements}
\label{sec:init-and-meas}

In this section, we present protocols to perform logical $Z$ and $X$ measurements.
Logical $Z$ measurement can be implemented using the protocol presented in Ref.~\cite{Cohen2022}.
However, the same cannot be used for performing logical $X$ measurements.
We adapt the protocols presented in Ref.~\cite{Cohen2022} to perform logical $X$ measurements.
We now briefly review the protocol presented in Ref.~\cite{Cohen2022} for performing single qubit logical measurement.

\begin{figure}[htb]
    \centering
    \begin{subfigure}{.225\textwidth}
        \centering
        \includegraphics[scale = .75]{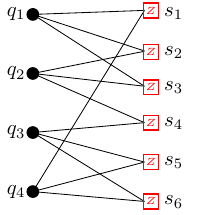}
        \subcaption{}
        \label{fig:tanner_graphs}
    \end{subfigure}
    ~
    \begin{subfigure}{.225\textwidth}
        \centering
        \includegraphics[scale = .75]{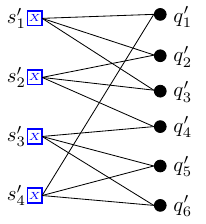}
        \subcaption{}
        \label{fig:dual-tanner_graphs}
    \end{subfigure}
    ~
    \begin{subfigure}{.45\textwidth}
        \centering
        \includegraphics[scale = .85]{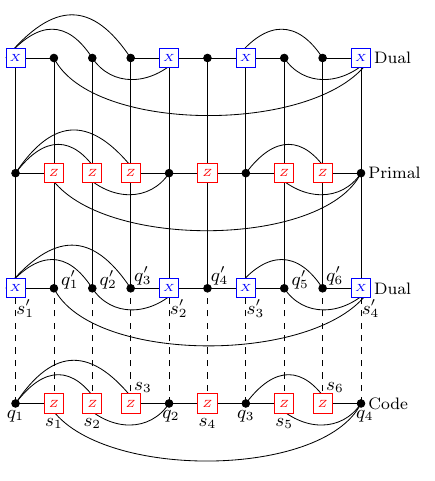}
        \subcaption{}
        \label{fig:log-Z-meas}
    \end{subfigure}
    \caption{Tanner graphs. The sub-Tanner graph corresponding to a logical operator $\bar L$ with support on qubits $q_i$, $1 \le i \le 4$ is shown in (a). The dual of the Tanner graph in (a) is shown in (b). It is obtained by replacing checks with qubits and qubits with $X$-type checks. The Tanner graph for measuring operator $\bar L$ is shown in (c). The bottom layer corresponds to the code and the rest to the ancilla. Notice that a primal layer is sandwiched between two dual layers.}
    \label{fig:tanner-graphs}
\end{figure}

Suppose that the logical operator we like to measure is $\bar{L} = \prod_{i = 1}^4 X_{q_i}$.
The qubits in support of $\bar{L}$ are $q_i$, $1 \le i \le 4$.
Let the $Z$-type stabilizers that share common qubits with $\bar L$ be $s_i$, $1 \le i \le 6$.
A Tanner graph corresponding to the qubits in support of the logical operator $\bar L$ is shown in Fig.~\ref{fig:tanner_graphs}.
Circles represent qubits (bit nodes) and squares represent $Z$-type stabilizers (also called check nodes).
There is an edge between a bit node $q_i$ and check node $s_j$ in the Tanner graph if and only if the bit node $q_i$ participates in the check $s_j$.
Observe that the Tanner graph is bipartite; no two circles or squares are connected by an edge.
Dual of a Tanner graph is constructed by replacing circles with squares and vice versa.
If the squares of the Tanner graph are $Z$-type stabilizers, then the dual Tanner graph will have $X$-type stabilizers for squares, see Fig.~\ref{fig:dual-tanner_graphs}.
The connections between the two sets remain unchanged.

An ancilla system with specific stabilizer structure is constructed to perform logical measurements.
The construction of ancilla lattice for logical measurement, as described in Ref.~\cite{Cohen2022}, is shown in Fig.~\ref{fig:log-Z-meas}.
The first layer of the ancilla is the dual Tanner graph with $X$-type stabilizers.
This is followed by an alternating sequence of primal and dual Tanner graphs, ending in a dual Tanner graph.
One takes $w - 1$ primal Tanner graphs sandwiched between $w$ dual Tanner graphs where $w$ is the weight of the logical operator being measured.
For brevity, Fig.~\ref{fig:log-Z-meas} shows two dual Tanner graphs and a primal Tanner graph.
The qubits and $X$-type stabilizers of the dual Tanner graph in the first layer are connected to the corresponding $Z$-type stabilizers and qubits in the code respectively.
Qubits and check nodes in the primal Tanner graphs are connected to the corresponding check nodes and qubits in the dual Tanner graphs.
All qubits in the ancilla are initialized in the  $ \vert 0\rangle$ state and then stabilizer measurements are performed for $d$ rounds, where $d$ is the code distance.

Stabilizers of the ancilla are designed such that a suitable product of them gives the logical operator we wish to measure.
Consider the product of all $X$-type stabilizers in the ancilla.
Note that, by construction, any qubit in the ancilla lattice participates in exactly an even number of $X$-type stabilizers.
Also, the qubits in support of the logical operator participate in the $X$-type stabilizers of the dual Tanner graph in the first layer.
As a result, the product of the $X$-type stabilizers of ancilla has support only on the qubits in support of the logical operator:
\begin{equation}
    \prod_{i \in \text{ancilla}} S_{X,i} = \prod_{j = 1}^4 X_{q_j} = \bar L.
\end{equation}
Therefore, the logical measurement outcome can be inferred from outcomes of $X$-type stabilizers in the ancilla lattice.
After $d$ rounds of stabilizer measurements, the qubits in the ancilla are measured in the $Z$-basis.\\

\noindent \emph{Logical Z measurement.} Logical $Z$ operator, denoted $\bar Z$, has only Pauli $X$-operators in its support.
Note that $\bar Z$ shares support with $Y$-type brick stabilizers and $Y$-type stabilizer defined on twists.
However, $\bar Z$ can be deformed so that the physical qubits in its support do not participate in a $Y$-type stabilizer.
This gives the Tanner graph structure with physical qubits in the support of $\bar Z$ and only $Z$-type checks.
Therefore, protocol presented in Ref.~\cite{Cohen2022} can be used for logical-$Z$ measurement.\\

\begin{figure}[htb]
    \centering
    \begin{subfigure}{.45\textwidth}
        \centering 
        \includegraphics[scale = .75]{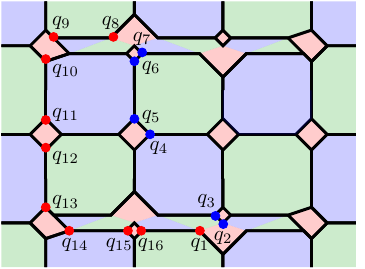}
        \subcaption{}
        \label{fig:domino-twists-modified-x-operator}   
    \end{subfigure}
    ~
    \begin{subfigure}{.45\textwidth}
        \centering 
        \includegraphics[scale = .75]{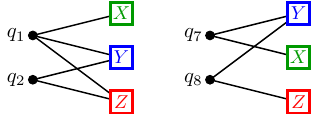} 
        \subcaption{}
        \label{fig:primal-dual-tanner-graphs-new-protocol}   
    \end{subfigure}
    ~
    \begin{subfigure}{.45\textwidth}
        \centering 
        \includegraphics[scale = .75]{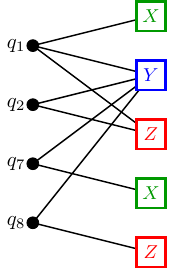} 
        \subcaption{}
        \label{fig:example-primal-tanner-graph}   
    \end{subfigure}
    ~
    \begin{subfigure}{.45\textwidth}
        \centering 
        \includegraphics[scale = .75]{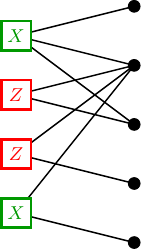} 
        \subcaption{}
        \label{fig:example-dual-tanner-graph}   
    \end{subfigure}
    \caption{(a) A representation of logical $X$ operator used for logical measurement. The vertices marked red and blue have Pauli operators $X$ and $Z$ respectively on them. (b) Tanner graphs where two qubits $q$, $q'$ participate in a $Y$-type stabilizer such that $\bar X \vert_{q,q'} = X_q Z_{q'}$. The qubits $q_1$, $q_2$, $q_7$, and $q_8$ are as shown in (a). (c) Combining the two checks corresponding to $Y$-type stabilizers into one check node. (d) The dual Tanner graph obtained from (c).}
\end{figure}

\begin{figure}[p]
    \centering
    \includegraphics[scale =.6,angle = 90]{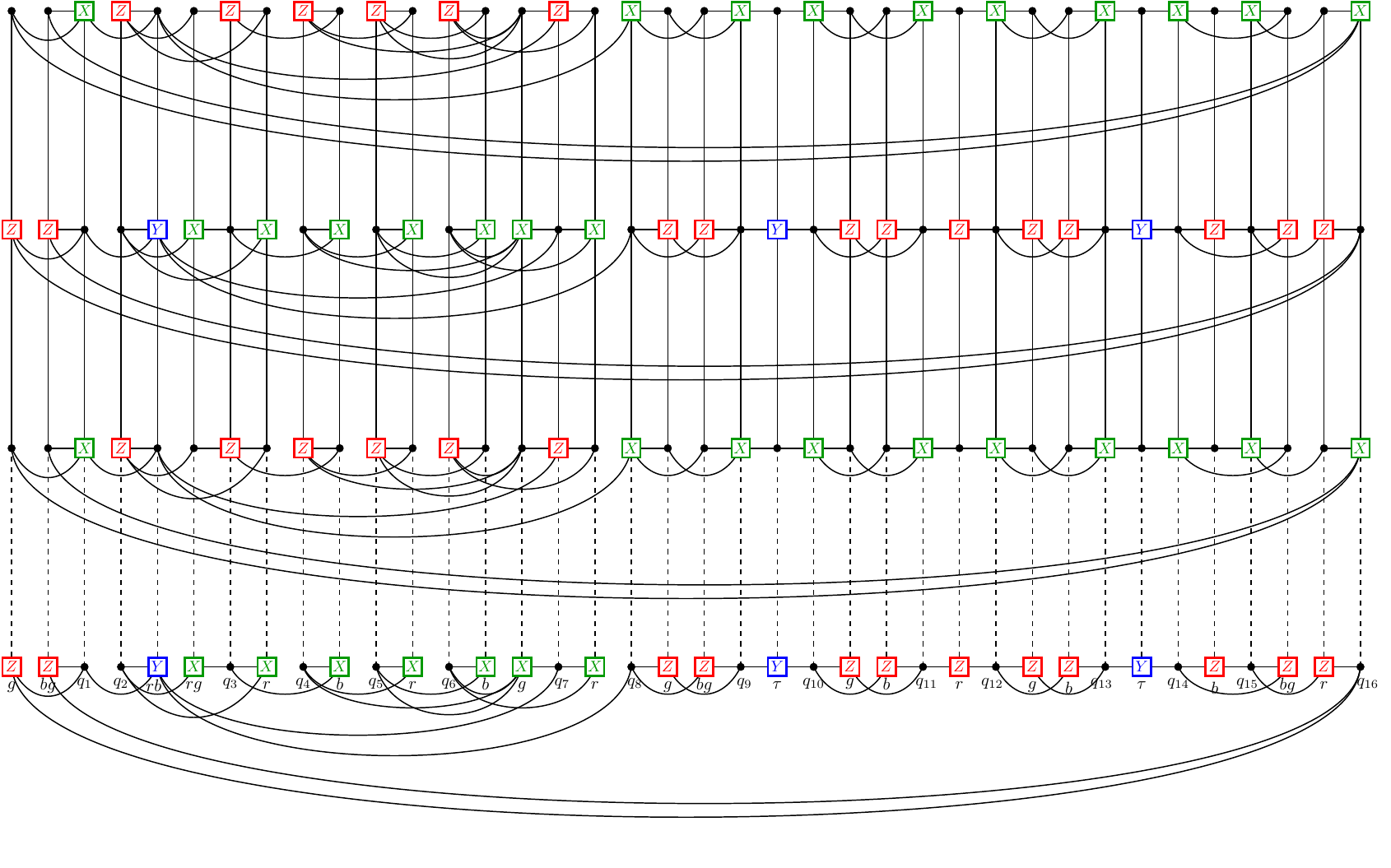}
    \caption{Protocol for measuring the logical $X$ operator in Tanner graph formalism. The sub-Tanner graph in the bottom corresponds to the code and the rest to the ancilla. Qubit numbering is same as that in Fig.~\ref{fig:domino-twists-modified-x-operator}. Labels below check nodes indicate the color of the face the particular stabilizer belongs to. For instance, the label $g$ indicates a stabilizer on green face and the label $bg$ indicates a brick stabilizer with half-bricks of color blue and green. The $Y$-stabilizer with label $rb$ corresponds to the product of two brick stabilizers $B_{rb}$. The label $\tau$ indicates twists.}
    \label{fig:domino-twists-tanner-graph}
\end{figure}

\subsection{Logical X measurement}
We now focus on measuring the logical $X$ operator, see Fig.~\ref{fig:domino-LO-X}.
The protocol in Ref.~\cite{Cohen2022} cannot be used for performing logical $X$ measurements.
The reason is that the logical $X$ operator has a mixture of Pauli $X$ and $Z$ operators on qubits in its support and also shares qubits with non-CSS stabilizers ($Y$-type twist stabilizers and red-blue brick stabilizers). 
Also, we cannot deform $\bar X$ such that is does not have support on $Y$-type stabilizers.
Therefore, we adapt the protocol in Ref.~\cite{Cohen2022} to suit logical-$X$ measurement.

We now construct the primal Tanner graph with qubits in support of $\bar X$ and stabilizers that share common support with $\bar X$.
A specific qubit node and a check node are connected if the qubit participates in the check.
Note that some qubits in support of $\bar X$ participate in $X$-, $Y$-, and $Z$-type stabilizers.
Consequently, we have three types of check nodes in the primal Tanner graph: $X$-type, $Y$-type, and $Z$-type check nodes, see Fig.~\ref{fig:primal-dual-tanner-graphs-new-protocol}.
We modify the primal Tanner graph depending on the qubits participating in $Y$-type stabilizers.
A $Y$-type stabilizer sharing two qubits $q$ and $q'$ with $\bar X$ such that $\bar X \vert_{q,q'} = X_{q} Z_{q'}$ are not individually represented. 
The $Y$-type brick stabilizers $B_{rb}^d$ have this property, see Fig.~\ref{fig:primal-dual-tanner-graphs-new-protocol}.
The reason being when we take the dual of the Tanner graph, the qubit corresponding to the $Y$-type stabilizer ($q_y$) participates in precisely one $X$ and $Z$ type stabilizer corresponding to qubits $q$ and $q'$ respectively.
When we take the product of stabilizers in the dual layers, this leads to $\bar X \prod_{q_y} Y_{q_y}$ which is undesirable.
Instead, two such $Y$-type stabilizers are jointly represented by a single $Y$-type check node, see Fig.~\ref{fig:example-primal-tanner-graph}.
This is equivalent to replacing the individual $Y$-type stabilizers by their product. 
However, if the restriction of $\bar X$ to the common support with a $Y$-type stabilizer has the same Pauli operators i.e., $\bar X \vert_{q,q'} = P_{q} P_{q'}$, $P \in \{X,Z \}$, then such $Y$-type stabilizers are represented as individual $Y$-type check nodes.
The $Y$-type stabilizer defined on twists satisfy this condition.
For example, the qubits $q_{13}$ and $q_{14}$ are common to $\bar X$ and $Y$-type twist stabilizer and $\bar X \vert_{q_{13},q_{14}} = X_{q_{13}} X_{q_{14}}$.
Therefore, $Y$-type stabilizer on twist face is retained as such in the Tanner graph.
This concludes the construction of primal Tanner graph.

The procedure for constructing the dual Tanner graph is as follows:
\begin{compactenum}[i)]
    \item Checks of $X$-, $Y$-, and $Z$-type are mapped to qubits.
    \item For a qubit $q \in \supp(\bar X)$ such that $\bar X \vert_q = P$, where $P \in \{X,Z \}$, then $q$ is mapped to a $P$-type check..
\end{compactenum}
The dual of the Tanner graph in Fig.~\ref{fig:example-primal-tanner-graph} is shown in Fig.~\ref{fig:example-dual-tanner-graph}.

The protocol for measuring the operator $\bar X$  is shown in Fig.~\ref{fig:domino-twists-tanner-graph}.
We use the qubit numbering given in Fig.~\ref{fig:domino-twists-modified-x-operator}.
A primal layer is sandwiched between two dual layers.
As before, if the weight of $\bar X$ is $w$, then there are $w - 1$ primal Tanner graphs sandwiched between $w$ dual Tanner graphs (for brevity, Fig.~\ref{fig:domino-twists-tanner-graph} shows two dual and one primal Tanner graphs).
Thus, the number of qubits used for the ancilla lattice is $O(w^2)$.
The stabilizer measurements are repeated for $d$ rounds where $d$ is the code distance.
Post stabilizer measurement, the qubits in the ancilla are measured in the $Z$ basis.\\

\begin{theorem}[X-logical measurement.]
     The protocol shown in Fig.~\ref{fig:domino-twists-tanner-graph} implements logical $X$ measurement. 
    \label{thm:logical-x-measurement}
\end{theorem}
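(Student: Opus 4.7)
The plan is to show that an appropriate product of the ancilla check operators in Fig.~\ref{fig:domino-twists-tanner-graph} evaluates, when restricted to the code qubits, to exactly $\bar X$; once this operator identity is established, the measurement protocol proceeds in the same way as in Ref.~\cite{Cohen2022}. First I would enumerate the ancilla layers: each qubit in a dual layer corresponds to a code stabilizer overlapping $\supp(\bar X)$, carrying the Pauli type ($X$, $Y$, or $Z$) of that stabilizer, while each check node in a dual layer corresponds to a data qubit $q \in \supp(\bar X)$ carrying the Pauli type $\bar X\vert_q \in \{X,Z\}$. The Tanner-graph edges determine which ancilla qubit participates in which check, and the sandwich geometry is arranged so that each primal-layer qubit lies in exactly one check of each adjacent dual layer.

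Next I would form the product $\Pi$ of all dual-layer checks and verify two properties. First, \emph{ancilla cancellation}: every ancilla qubit appears in an even number of factors of $\Pi$. For a dual-layer qubit representing a CSS check this follows because the matched $X$- and $Z$-checks pair up through the primal layers above and below; for a primal-layer qubit it follows from the two surrounding dual layers. The merged $Y$-brick qubits and the individual $Y$-twist qubits require special care, but the primal/dual constructions in Fig.~\ref{fig:example-primal-tanner-graph} and Fig.~\ref{fig:example-dual-tanner-graph} are precisely designed so that each such qubit is acted on by the correct combination of neighbouring checks to cancel. Second, \emph{data support}: on each $q \in \supp(\bar X)$, exactly one factor of $\Pi$ acts non-trivially and contributes $\bar X\vert_q$; this is immediate because $q$ appears in a single dual-layer check of type $\bar X\vert_q$ in the first ancilla layer, all other appearances cancelling by the first property. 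Combining these, $\Pi = \bar X$ as an operator on the code qubits.

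Having $\Pi = \bar X$, the product of the corresponding measurement outcomes over the $d$ rounds of ancilla stabilizer measurements reveals the eigenvalue of $\bar X$ on the encoded state, after which the standard argument of Ref.~\cite{Cohen2022} upgrades this to fault tolerance against measurement errors up to the code distance. The main obstacle, and where the adaptation really bites, is the non-CSS pieces in the first property: a $Y$-type red-blue brick stabilizer with $\bar X\vert_{q,q'} = X_q Z_{q'}$ would, if represented individually as a $Y$-check, introduce a spurious $Y$ factor in $\Pi$ that nothing cancels. The merging of two such bricks into one $Y$-check, together with the distinct treatment of the twist $Y$-stabilizers (where $\bar X\vert_{q,q'} = X_q X_{q'}$ and no merging is needed), must be shown to be globally consistent along the entire string of Fig.~\ref{fig:domino-twists-modified-x-operator}. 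I would verify this by walking along the logical string and confirming that the brick $Y$-stabilizers encountered always come in pairs admitting a common primal neighbour, and that the merged checks continue to commute with all other ancilla checks; this is where I expect the bookkeeping to be most delicate.
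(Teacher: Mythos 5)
Your proposal is correct and follows essentially the same route as the paper: both arguments take the product of the dual-layer stabilizers, show every ancilla qubit appears an even number of times (with the merged $rb$-type $Y$-check qubit handled as the special case), and observe that each data qubit in $\supp(\bar X)$ contributes exactly $\bar X\vert_q$, so the product equals $\bar X$. The additional bookkeeping you flag about the merged $Y$-brick checks versus the individually retained twist $Y$-checks is handled in the paper's construction of the primal and dual Tanner graphs rather than in the proof itself, which is considerably terser than your write-up.
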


\begin{proof}
    Consider the product of stabilizers in the dual Tanner graphs in Fig.~\ref{fig:domino-twists-tanner-graph}.
    Observe that each qubit in the ancilla participates exactly in an even number of $X$-type or $Z$-type stabilizers
    (the qubit corresponding to $Y$-stabilizer with label $rb$ participates in precisely two $X$-type and $Z$-type stabilizers).
    Also note that the qubits in support of $\bar X$ participate in the stabilizers of the dual Tanner graph.
    Therefore, the product of stabilizers of dual Tanner graphs gives $\bar X$.
\end{proof}

The protocols presented above can be used for logical qubit initialization as well.
We now move on to protocols for implementing logical gates on encoded qubits.

\section{Logical gates}	
\label{sec:clifford-gates}

In this section, we present protocols for implementing logical Clifford gates by braiding domino twists.
Prior to that, we establish one-to-one correspondence between Ising anyon model and color code with domino twists.
This correspondence is made use of during logical gate implementation.

\subsection{Mapping to Ising anyon model}
\label{sec:Ising-model}

The Ising anyon model has three components:
\begin{compactenum}[i)]
\item Particles: $1$, $\psi$ and $\sigma$, where $1$ is the vacuum, $\psi$ is fermion, and $\sigma$ is non-Abelian anyon.
\item Fusion Rules: $1 \times a = a $, $\psi \times \psi = 1$, $\sigma \times \psi = \sigma$, $\sigma \times \sigma = 1 + \psi $.
\item  $R$ and $F$ matrices.
\end{compactenum}
\vspace{2mm}

The particles in the color code with domino twists have similar behavior as the corresponding particles in the Ising model.
The particle $rx \times bz$ is a Fermion in the color code with domino twists.
Note that non-Abelian anyons have dimension greater than one.
The dimension of domino twists is $\sqrt{2}$~\cite{Kesselring2018} and hence we can think of domino twists as hosting a non-Abelian anyon.

The anyons in color code obey the corresponding fusion rules.
For instance, consider the fusion of two non-Abelian anyons.
The fusion of two non-Abelian anyons in color codes corresponds to measuring the string operator encircling a pair of twists, see Fig.~\ref{fig:domino-LO-Z}.
Since the string encircling a pair of twists is a Pauli operator, its measurement outcome is either $+1$ or $-1$.
The outcomes $\pm 1$ correspond to non-Abelian anyons fusing to either vacuum or fermion respectively in the Ising model.
The error operator that creates the particle $rx \times bz$ commutes with brick stabilizers and twists, see Fig.~\ref{fig:domino-LO-X}.
Hence, measuring brick stabilizers or twist stabilizers will not reveal the presence of this particle. 
Correspondingly, the fermion $rx \times bz$ is \textit{absorbed} by the domain wall.
The above arguments motivate us to map the elements in the color code with domino twists to Ising model.
The mapping is given in Table~\ref{tab:ising_domino}.

\begin{table}[htb]
    \centering
    \caption{Mapping between particle types in Ising anyon model and color code with domino twists.}
    \begin{tabular}{lllllllll}
    \hline
    \hline
      Ising Model  &&&&&&& & Color code with domino twist \\
      \hline
        $1$ & &&&&&&& $1$ (vacuum)\\
        \hline
        $\psi$ & &&&&&&& $rx \times bz$\\
        \hline
        $\sigma$ &&&&&&& & domino twist\\
        \hline
        \hline
    \end{tabular}
    \label{tab:ising_domino}
\end{table}

\subsection{Majorana and Fermionic operators}
A Majorana operator $c_m$ is associated with every twist $t_m$ satisfying the condition~\cite{Sarma2015}
\begin{equation}
    c_m c_n + c_n c_m = 2\delta_{mn}, c_m^2 = 1.
    \label{eqn:majorana-op-condition}
\end{equation}

\begin{figure}[htb]
    \centering\includegraphics[scale = 0.9]{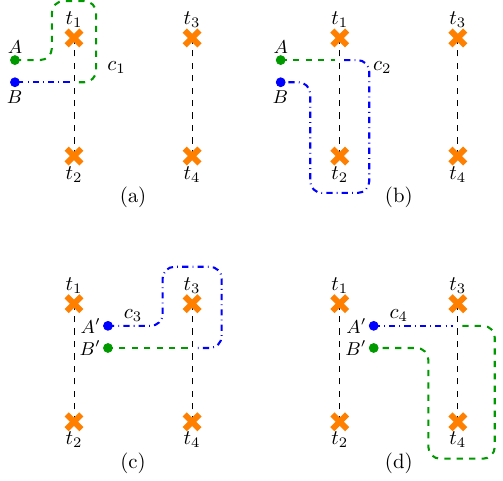}
    \caption{Majorana operators in color code with domino twists.}
    \label{fig:majorana-operators-domino}
\end{figure}

Majorana operators associated with domino twists are shown in Fig.~\ref{fig:majorana-operators-domino}.
Observe that the operators $c_1$ and $c_2$ have different Pauli operators on the vertex where they intersect and hence the two operators anticommute.
Note that operator $c_m$ does not commute with stabilizer generators as it corresponds to an open string on the lattice.
However, the product of a two Majorana operators commutes with stabilizer generators (strings corresponding to such operators encircle a pair of twists), see Appendix~\ref{sec:majorana-to-log-op-string}.
Such operators are self-adjoint and commute with all stabilizer generators.
Therefore, such operators are Fermionic~\cite{Sarma2015} and thus we can treat a twist pair as a Fermionic mode.
Closed string operators enclosing an even number of twists is the product of the Majorana operators of individual twists enclosed by the closed string.

\noindent \emph{Fermion operators.}
A set of operators $\{ \hat{a}_i \}$ are called Fermion operators if they satisfy the following properties~\cite{sakuraiA}:
\begin{equation}
    \{\hat{a}_i, \hat{a}_j \} = \{ \hat{a}_i^{\dagger}, \hat{a}_j^{\dagger} \} = 0, \text{ } \{\hat{a}_i, \hat{a}_j^{\dagger} \} = \delta_{ij}.
    \label{eqn:fermion-properties}
\end{equation}
The operators $\hat{a}$ and $\hat{a}^{\dagger}$ are called annihilation and creation operators respectively.

Denote the basis states by $\vert n \rangle$ where $n$ denotes the number of particles present in the system.
Such states are called number states.
In this particular case, $n = 0,1$ as we are associating the basis with a Fermionic mode.
We denote the basis states by $\vert0\rangle$ and $\vert1\rangle$ corresponding to the values $0$ and $1$.
The action of operators $\hat{a}$ and $\hat{a}^{\dagger}$ on basis states is given below:
\begin{equation}
	\hat{a}\vert0\rangle = 0, \quad \hat{a}\vert1\rangle = \vert0\rangle, \quad \hat{a}^{\dagger}\vert0\rangle = \vert1\rangle, \quad \hat{a}^{\dagger}\vert1\rangle = 0.	
 \label{eqn:action-of-fermion-ops}
\end{equation}

Majorana and Fermion operators can be expressed in terms of the other as follows~\cite{Kitaev2006,Roy2017}:
\begin{subequations}
    \begin{eqnarray}
    \label{eqn:majorana-to-fermion}
c_{2k - 1} = \hat{a}_k + \hat{a}_k^{\dagger}, \text{     } c_{2k} = -i(\hat{a}_k - \hat{a}_k^{\dagger}), \\
\label{eqn:fermion-to-majorana}
\hat{a}_k =  \frac{c_{2k - 1} + i c_{2k}}{2}, \text{     } \hat{a}_k^{\dagger} =  \frac{c_{2k - 1} - i c_{2k} }{2}.
\end{eqnarray}
\label{eqn:anyon-fermion-ops-mapping}
\end{subequations}
where $ k = 1,2, \dots, m$.
Thus, a creation and an annihilation operator $(\hat a_k, \hat a^{\dagger}_k)$ can be associated with a pair of twists created together.

\nix{
Define an operator $\hat{N}$ such that $\hat{N}\vert n\rangle = n\vert n\rangle$. 
The operator $\hat{N}$ is Hermitian and measures the number of particles in state $\vert n\rangle$ and is given by $\hat{N} = \hat{a}^{\dagger} \hat{a}$. 
Consider the operator $\hat{N}^2$:
\begin{equation*}
    \hat{N}^2 = \hat{a}^{\dagger}\hat{a}\hat{a}^{\dagger}\hat{a} = \hat{a}^{\dagger} (1 - \hat{a}^{\dagger}\hat{a})\hat{a} =  \hat{a}^{\dagger}\hat{a} = \hat{N}.
\end{equation*}
The operator $N$ satisfies the relation $\hat{N}^2 = \hat{N}$. 
Therefore, eigenvalues of $\hat{N}$ can take only two values: $0$ and $1$.
Consequently, the operator $\hat{a}$ acts on a Hilbert space of dimension two. 
}

\subsection{Encoding}
We associate a number basis $\{ \vert 0 \rangle, \vert 1 \rangle \}$ with a pair of twists $t_{2m-1}$ and $t_{2m}$.
The state $ \vert 0 \rangle $ ($\vert 1 \rangle$) corresponds to the absence (presence) of a fermion.
The states $ \vert0\rangle $ and $\vert1\rangle$ are $+1$ and $-1$ eigenstates of the operator $-ic_{2m-1}c_{2m}$ which is a closed string enclosing two twists $t_{2m-1}$ and $t_{2m}$. 

Suppose that the lattice has only a pair of twists.
From Eq.~\eqref{eqn:num-ind-stabs} it follows that the number of independent stabilizers is $n$.
Therefore, the dimension of codespace is one.
Let $O$ be the operator corresponding to the string encircling the twists.
The operator $O$ is not in the stabilizer group.
In this case, we can initialize the code in either $\vert 0\rangle$ or $\vert 1\rangle$ which correspond to the $\pm 1$-eigenstates of the operator $O$.
Therefore, two twists cannot encode a logical qubit~\cite{Bravyi2006,Nayak2008}. 
If we start with a lattice with zero encoded qubits, two pairs of twists are needed to encode a logical qubit.

Suppose that there are four twists $t_1$, $t_2$, $t_3$ and $t_4$ in the lattice.
The number basis used is $ \vert b_1, b_2 \rangle$ where $b_1, b_2 \in \{0,1 \}$, $b_1$ and $b_2$ are associated with twist pairs ($t_1$, $t_2$) and ($t_3$, $t_4$) respectively.
We use states $\vert 0,0 \rangle$ and $\vert 1,1 \rangle$ which correspond to the subspace where fusion outcome is vacuum.
The states $\vert 0,0 \rangle$ and $\vert 1,1 \rangle$ are $ \pm 1 $-eigenstates of the operator $-ic_1c_2$.
This encoding can be extended to more than two twist pairs.
If there are six twists, $t_{1}, \dots, t_{6}$, the number basis is $\vert b_1, b_2, b_3 \rangle$ where $b_1, b_2, b_3 \in \{0,1 \} $.
As before, we choose the subspace where the fusion outcome is zero.
This subspace is spanned by $\vert 0,0,0 \rangle$, $\vert 0,1,1 \rangle$, $\vert 1,0,1 \rangle$, $\vert 1,1,0 \rangle$. 
Note that these states are the $+1$-eigenstates of the operator $ (-ic_1c_2)(-ic_3c_4)(-ic_5c_6)$. 
We call this basis as logical basis. 

We use the relation in Eq.~\eqref{eqn:majorana-to-fermion} to analyse the action of operator $-ic_2 c_3$ and $-i c_1 c_3$ on the logical basis.
Note that $-ic_2c_3 =   (\hat{a}_1^{\dagger} - \hat{a}_1) (\hat{a}_2 + \hat{a}_2^{\dagger})$ and $-ic_1c_3 = -i(\hat{a}_1 + \hat{a}_1^{\dagger})(\hat{a}_2^{\dagger} + \hat{a}_2) $. 
We have,
\begin{subequations}
    \begin{eqnarray}
        	-ic_2c_3\vert0,0\rangle = \vert1,1\rangle, & & -ic_2c_3\vert1,1\rangle = - \vert0,0\rangle\\
         -ic_1c_3\vert0,0\rangle = -i\vert1,1\rangle, & &  -ic_1c_3\vert1,1\rangle = -i\vert0,0\rangle
    \end{eqnarray}
\end{subequations}
From the action of operators $-ic_2c_3$ and $-ic_1c_3$, it is clear that they act as $Y$ and $X$ operators up to an overall phase. 
Correcting for the phase, we get $c_1c_3$ and $c_2c_3$ to be logical $X$ and $Y$ operators respectively. 
Therefore, the operators $c_1c_3$, $c_2c_3$,  and $-ic_1c_2$ are logical $X$, $Y$ and $Z$ operators respectively.
We take the strings encircling twists ($t_1$, $t_2$) and ($t_1$, $t_3$) to be the logical $Z$ and logical $X$ operators respectively, see Fig.~\ref{fig:lo-domino}.\\

\noindent \emph{Mapping number states to codewords.}
The Hilbert space of $n$ Fermionic modes was identified with the Hilbert space of $n$ qubits in Ref.~\cite{Bravyi2002}.
Here, we show that we can map the number states to the code space.
Recall that the number state indicates the presence or absence of fermions in a given pair of twists.
Such states can be mapped to codestates.
For instance, consider the state $\vert 0,0 \rangle $.
This state corresponds to a system where the outcome of fusing non-Abelian anyon pairs is zero.
In code space, this state corresponds to the common $+1$-eigenstate of stabilizers and the logical $Z$ operator.
Therefore, $\vert 0,0 \rangle \longrightarrow \vert \bar 0 \rangle_{code}$.
One can similarly argue that $\vert 1,1 \rangle \longrightarrow \vert \bar 1 \rangle_{code}$ where $\vert \bar 1 \rangle$ is the $+1$-eigenstate of stabilizers and and a $-1$-eigenstate of the logical $Z$ operator.
When there are three pairs of twists in the lattice, the state $\vert 0,0,0 \rangle$ corresponds to the state $\vert \bar 0 \rangle_1 \otimes \vert \bar 0 \rangle_2$.
The state $\vert \bar 0 \rangle_1 \otimes \vert \bar 0 \rangle_2 $ is the $+1$-eigenstate of all stabilizer generators, $\bar Z_1$, and $\bar Z_2$.
Other states and their mapping to the states in codespace are given in Table~\ref{tab:fusion-code-mapping}.

\begin{table}[htb]
    \centering
        \caption{Mapping between number states and codestates.}
    \renewcommand{\arraystretch}{1.5}
    \begin{tabular}{ccccccc}
    \hline
    \hline
        & Number states & Codestates & Stabilizers & $\bar Z_1$ & $\bar Z_2$ & \\
        \hline
        &  $\vert 0,0,0 \rangle$ & $\vert \bar 0\rangle_1 \otimes \vert \bar 0 \rangle_2$ & $+1$ & $+1$ & $+1$  & \\
        &  $\vert 0,1,1 \rangle$ & $\vert \bar 0\rangle_1 \otimes \vert \bar 1 \rangle_2$ & $+1$ & $+1$ & $-1$ &  \\
        &  $\vert 1,1,0 \rangle$ & $\vert \bar 1\rangle_1 \otimes \vert \bar 0 \rangle_2$ & $+1$ & $-1$ & $+1$ &  \\
        &  $\vert 1,0,1 \rangle$ & $\vert \bar 1\rangle_1 \otimes \vert \bar 1 \rangle_2$ & $+1$ & $-1$ & $-1$ &  \\
        \hline 
        \hline
    \end{tabular}
    \label{tab:fusion-code-mapping}
\end{table}

\subsection{Gates}

The world-line of twists without braiding is shown Fig.~\ref{fig:encoding}. 
We use $\lfloor t / 3 \rfloor$-encoding and therefore three twists are used to encode a logical qubit.
The closed string encompassing twists $t_{1}$, $t_{2}$ and $t_{2}$, $t_{3}$ corresponds to the logical $Z$ and $X$ operators respectively, see Fig.~\ref{fig:lo-domino}.
Braiding a pair of twists encircled by a logical operator, say $\bar{Z}$, leaves the logical operator unchanged and exchanges the other two logical operators, $\bar{X} \longleftrightarrow \bar{Y}$~\cite{GowdaSarvepalli2020, GowdaSarvepalli2021}.
This is reminiscent of Pauli  $Z$ rotation by $\pi /  2$.
We can conclude that braiding twists has the effect of $\pi / 2$ rotation on the code space up to an overall phase~\cite{Nayak2008}. 
The above argument holds when twists encircled by strings corresponding to logical $X$ or $Y$ operators are braided.

\begin{figure}[htb]
  \centering
      \begin{subfigure}{0.225\textwidth}
         \centering
         \includegraphics{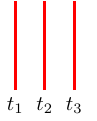}
         \subcaption{}
        \label{fig:encoding}
      \end{subfigure}
      ~
      \begin{subfigure}{0.225\textwidth}
         \centering
         \includegraphics{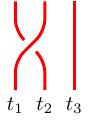}
         \subcaption{}
        \label{fig:phase}
      \end{subfigure}
      ~
      \begin{subfigure}{0.225\textwidth}
         \centering
         \includegraphics{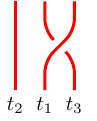}
         \subcaption{ }
         \label{fig:hadamard}
      \end{subfigure}
      ~
       \begin{subfigure}{.225\textwidth}
         \centering
         \includegraphics{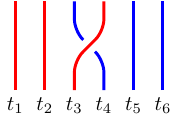}
         \subcaption{}
        \label{fig:entangling-gate}
      \end{subfigure}
\caption{ (a) Recall that we use $\lfloor t / 3 \rfloor$-encoding. So, three twists are used to encode a logical qubit.  Logical-$Z$ operator is represented by the string enclosing twists $t_1$ and $t_2$ and logical-$X$ operator by the string enclosing $t_1$ and $t_3$. (b) Braiding twists $t_1$ and $t_2$ counterclockwise implements phase gate. (c) Braiding twists $t_1$ and $t_3$ implements $X$-rotation by $\pi / 2$ gate on the codespace. Note that the positions of twists $t_1$ and $t_2$ are interchanged for clarity. (d) Protocol to implement an entangling gate up to single qubit rotations. Twists $t_{1}$-$t_{3}$ and $t_{4}$-$t_{6}$ are used to encode logical qubits on which the entangling gate acts. }
\label{fig:braid}
\end{figure}

We analyze the effect of braiding twist pairs using number states and Fermion operators.
Since number states have one to one correspondence with code states, the results directly carry over to the code space.
Recall that logical $X$ and $Z$ operators are $\bar{Z}_1 = -i c_1c_2, \quad \bar{X}_1 =  c_1 c_3$.
The unitary operator resulting from braiding twists $k$ and $l$ can be described as~\cite{Bravyi2006}
	\begin{equation}
		\Theta(k,l) = \exp \left( -i \frac{\pi}{4} (-ic_kc_l) \right) = \frac{1}{\sqrt{2}} \left( I - c_kc_l \right).
        \label{eqn:braiding-k-l-twists-effect}
	\end{equation}

\paragraph*{$Z$ rotation by $\pi / 2 $.} 
$Z$ rotation by $\pi / 2$ is implemented by braiding twists $t_1$ and $t_2$ which are encircled by the string corresponding to logical $Z$ operator.
From Eq.~\ref{eqn:braiding-k-l-twists-effect}, it follows that 
\begin{equation}
     \Theta(1,2) = \frac{1}{\sqrt{2}} \left( I - c_1c_2 \right).
\end{equation}
Note that $c_1 c_2 = -i (\hat{a}_1 + \hat{a}_1^\dagger)(\hat{a}_1 - \hat{a}_1^\dagger)$ and hence
\begin{equation}
    \Theta(1,2) =  \frac{1}{\sqrt{2}} [I + i (\hat{a}_1^\dagger \hat{a}_1 - \hat{a}_1 \hat{a}_1^\dagger )  ].
\end{equation}
Using this, we derive the transformation of the basis states $ \vert0,0\rangle$ and $\vert 1,1 \rangle$ under the action of the unitary $\Theta(1,2)$:
\begin{subequations}
\begin{eqnarray}
\Theta(1,2) \vert0,0\rangle &=& \frac{1}{\sqrt{2}} \left[I + i (\hat{a}_1^\dagger \hat{a}_1 - \hat{a}_1 \hat{a}_1^\dagger )  \right] \vert0,0\rangle  = \frac{1-i}{\sqrt{2}} \vert0,0\rangle,  \\
\Theta(1,2) \vert1,1\rangle &=& \frac{1}{\sqrt{2}} \left[I + i (\hat{a}_1^\dagger \hat{a}_1 - \hat{a}_1 \hat{a}_1^\dagger )  \right] \vert1,1\rangle  = \frac{1+i}{\sqrt{2}} \vert1,1\rangle.
\end{eqnarray}
\label{eqn:braiding-twists-1-2}
\end{subequations}
where we have used Eq.~\eqref{eqn:action-of-fermion-ops}.
In the basis of $\vert0,0\rangle $ and $\vert1,1\rangle$, the unitary $\Theta(1,2)$ has the form
\begin{equation}
    \Theta(1,2) = \left( \begin{matrix}
e^{-i\pi / 4} & 0\\ 0 & e^{i\pi / 4}
\end{matrix} \right)\\ 
 = e^{-i \pi/4} \left( \begin{matrix}
1 & 0\\ 0 & i
\end{matrix} \right) = e^{-i\pi / 4}S = R_Z(\pi /2).
\label{eqn:phase-rz-rotation}
\end{equation}
It can be seen that the unitary $\Theta(1,2)$ is the phase gate up to an overall phase.
The braiding is shown in  Fig.~\ref{fig:phase}.

\paragraph*{$X$ rotation by $\pi / 2 $.} 
The $X$ rotation by $\pi / 2$ is implemented by braiding twists $t_1$ and $t_3$.
Note that
\begin{equation}
   \Theta(1,3) = \exp\left( - i\frac{\pi}{4} \bar{X}\right) = \exp\left( - i\frac{\pi}{4} c_1 c_3 \right) = \frac{\left( I - ic_1 c_3 \right)}{\sqrt{2}} 
   \label{eqn:theta-1-3}
\end{equation}
We have, $-ic_1c_3 = -i(\hat{a}_1 + \hat{a}_1^{\dagger})(\hat{a}_2^{\dagger} + \hat{a}_2) = -i( \hat{a}_1 \hat{a}_2^\dagger + \hat{a}_1 \hat{a}_2 + \hat{a}_1^\dagger \hat{a}_2^\dagger + \hat{a}_1^\dagger \hat{a}_2)$.
The effect of the unitary $\Theta(1,3)$ on the basis states $\vert0,0\rangle$ and $\vert1,1\rangle$ is as below:
\begin{subequations}
    \begin{eqnarray}
        \Theta(1,3) \vert0,0\rangle &=& \frac{1}{\sqrt{2}} \left[ I -i(\hat{a}_1 + \hat{a}_1^{\dagger})(\hat{a}_2^{\dagger} + \hat{a}_2)\right] \vert 0,0 \rangle = \frac{1}{\sqrt{2}} \left(\vert 0,0\rangle  -i\vert 1,1 \rangle \right),\\
        \Theta(1,3) \vert 1,1 \rangle &=& \frac{1}{\sqrt{2}} \left[ I -i(\hat{a}_1 + \hat{a}_1^{\dagger})(\hat{a}_2^{\dagger} + \hat{a}_2)\right] \vert 1,1 \rangle = \frac{1}{\sqrt{2}} \left(\vert1,1\rangle  -i\vert0,0\rangle \right).
    \end{eqnarray}
    \label{eqn:effct-of-1-3}
\end{subequations}
In the basis of $\vert0,0\rangle$ and $\vert1,1\rangle$, the operator $\Theta(1,3)$ can be written as
\begin{equation}
    \Theta(1,3) = \frac{1}{\sqrt{2}}\left( \begin{matrix}
 1 &  -i \\  -i &  1
\end{matrix} \right) = R_X(\pi / 2).
\end{equation}
The braiding rule is shown in Fig.~\ref{fig:hadamard}. Hadamard gate is implemented by sandwiching $R_X(\pi / 2)$ gate between $R_Z(\pi / 2)$ gates.
Specifically, $H = e^{-i \alpha} R_Z(\pi /2) R_X(\pi /2) R_Z(\pi /2) $~\cite{Nielsen2010}.

\paragraph*{Entangling gate.}
We consider the effect of braiding twists $t_3$ and $t_4$ on the logical basis states which is given by the matrix $\Theta(3,4) = \frac{1}{\sqrt{2}} (I - c_3c_4)$.
It can be shown that the string encircling twists  $t_3$ and $t_4$ is $\bar{Z}_1 \bar{Z}_2$~\cite{GowdaSarvepalli2021}.
Recall that the  states $\vert0,0,0\rangle$, $\vert1,0,1\rangle$ and $\vert0,1,1\rangle$, $\vert1,1,0\rangle$  are $+1$ and $-1$ eigensates of the operator $-ic_3c_4$ respectively.
The action of $\Theta(3,4)$ on the basis states is obtained by expressing Majorana operators $c_3$ and $c_4$ in terms of fermion operators $\hat{a}_2$ and $\hat{a}_2^\dagger$ and is as follows:
\begin{subequations}
\begin{eqnarray}
\Theta(3,4)\vert0,0,0\rangle &=& \frac{1}{\sqrt{2}} (1-i) \vert0,0,0\rangle \\
\Theta(3,4)\vert0,1,1\rangle &=& \frac{1}{\sqrt{2}} (1+ i) \vert0,1,1\rangle \\
\Theta(3,4)\vert1,1,0\rangle &=& \frac{1}{\sqrt{2}} (1+ i) \vert1,1,0\rangle\\
\Theta(3,4)\vert1,0,1\rangle &=& \frac{1}{\sqrt{2}} (1-i) \vert1,0,1\rangle 
\end{eqnarray}
\end{subequations}
In logical basis spanned by $\vert 0,0,0 \rangle$, $\vert 0,1,1\rangle$, $\vert 1,1,0 \rangle$, and $\vert 1,0,1 \rangle$ (in this order) the unitary $\Theta(3,4)$ is a diagonal matrix with elements $e^{-i \pi /4}$, $e^{i \pi /4}$, $e^{i \pi /4}$, and $e^{-i \pi /4}$ respectively.
Up to an overall phase, we can write $\Theta(3,4)$ as
\begin{equation}
    \Theta(3,4) = \exp{ \left(-i \frac{\pi}{4} \bar{Z}_1 \bar{Z}_2 \right)} = \diag \{ 1,i,i,1 \}.
    \label{eqn:entangling-gate}
\end{equation}

The gate specified by Eq.~\ref{eqn:entangling-gate} is controlled-phase gate up to conjugate-phase gates on logical qubits.
This braiding implements the following transformation on logical operators:
$\bar{Z}_1 \rightarrow \bar{Z}_1$,
$\bar{X}_1 \rightarrow \bar{Y}_1 \bar{Z}_2$, 
$\bar{Z}_2 \rightarrow \bar{Z}_2 $,
$\bar{X}_2 \rightarrow \bar{Z}_1 \bar{Y}_2$.
The results of this section is summarized in the theorem below:\\

\begin{theorem}[Logical Clifford gates.]
All logical Clifford gates are realized in color codes with domino twists by braiding.\\
\label{thm:encoded-Cliffords}
\end{theorem}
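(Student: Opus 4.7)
The plan is to verify that braiding realizes a generating set of the Clifford group on the encoded qubits; since any Clifford circuit decomposes into such generators, this suffices. A convenient choice of generators is $\{ S, H, CZ \}$, acting on arbitrary pairs of logical qubits.

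First I would assemble the single-qubit gates directly from the braiding identities already derived. Equation~\eqref{eqn:phase-rz-rotation} gives $\Theta(1,2) = e^{-i\pi/4} S$, so counterclockwise braiding of the twist pair encircled by $\bar Z_1$ realizes the phase gate (and the inverse braid realizes $S^\dagger$). Equation~\eqref{eqn:theta-1-3} gives $\Theta(1,3) = R_X(\pi/2)$, and the decomposition $H = e^{-i\alpha} R_Z(\pi/2) R_X(\pi/2) R_Z(\pi/2)$ already invoked in the text then expresses the Hadamard as a composition of three braids on the twist triple of a single logical qubit. Since each logical qubit in the $\lfloor t/3 \rfloor$-encoding is associated with an analogous triple of twists and its own pair of string operators $(\bar Z_k, \bar X_k)$, the identical argument yields $S$ and $H$ on every logical qubit.

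Next I would produce a two-qubit entangling gate. Equation~\eqref{eqn:entangling-gate} gives $\Theta(3,4) = \diag\{1,i,i,1\}$ in the logical basis of qubits $1$ and $2$, up to a global phase. A direct calculation yields
\begin{equation*}
    \diag\{1,i,i,1\} \cdot (S^\dagger \otimes S^\dagger) = \diag\{1,1,1,-1\} = CZ,
\end{equation*}
so $CZ$ is obtained by following $\Theta(3,4)$ with an $S^\dagger$-braid on each of the two logical qubits, each of which is itself a braid sequence as above. The standard identity $CNOT = (I \otimes H)\,CZ\,(I \otimes H)$ then produces $CNOT$ by further braiding. Combined with the single-qubit braids, this exhibits a generating set of the Clifford group entirely in terms of braids.

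The main obstacle I anticipate is showing that the braiding rules derived for the specific labellings $(1,2)$, $(1,3)$, and $(3,4)$ extend uniformly to arbitrary pairs of logical qubits. This reduces to identifying, for each pair $(k,l)$ of twists to be braided, the closed string $-i c_k c_l$ with the appropriate product of logical Pauli operators supported on the affected qubits, after which the Fermion-operator analysis of Section~\ref{sec:clifford-gates} applies verbatim. The intermediate global phases $e^{\pm i\pi/4}$ cancel over any complete circuit and are inconsequential for logical computation; the nontrivial content is topological, reflecting the $\sqrt{2}$-dimensional non-Abelian character of domino twists established in Section~\ref{sec:Ising-model}.
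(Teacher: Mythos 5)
Your proposal is correct and follows essentially the same route as the paper: the theorem is proved there by exhibiting the braids realizing $R_Z(\pi/2)$, $R_X(\pi/2)$ (hence $H$), and $\Theta(3,4)=\exp(-i\tfrac{\pi}{4}\bar Z_1\bar Z_2)$, which the paper likewise identifies as a controlled-phase gate up to single-qubit phase corrections. Your explicit check that $\diag\{1,i,i,1\}(S^\dagger\otimes S^\dagger)=CZ$ and the remark on extending the analysis to arbitrary twist pairs are just slightly more detailed renderings of the same argument.
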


This concludes our discussion on implementing logical Clifford gates by braiding. 
To be able to implement a general quantum circuit, we need a non-Clifford gate as well.
We briefly discuss the implementation of a non-Clifford gate that makes use of specially prepared states called magic states.\\

\noindent \emph{Non-Clifford gate.} A non-Clifford gate cannot be implemented by braiding.
Special states, called magic states, are required for implementing non-Clifford gates.
An often used gate is the $T$ gate defined as
\begin{equation}
    T = \diag\{1, e^{i \pi / 4} \}.
\end{equation}
The magic state used to implement this is $\vert m\rangle = 2^{-1/2} ( \vert 0\rangle + e^{i \pi / 4} \vert 1\rangle )$.
Using magic state encoded in an ancilla, the $T$-gate is implemented by joint $Z$-measurement of logical qubit and ancilla qubit followed by a CNOT gate with logical qubit as the control.
The ancilla is measured after CNOT and phase gate is applied on the logical qubit if the ancilla measurement outcome is $-1$~\cite{Bravyi2006}.
This concludes our discussion on implementing logical gates.

\section{Conclusion}
In this paper, we have presented a systematic way to introduce domino twists in color codes.
We have presented the code parameters for the constructed codes.
Also, we have presented protocols for carrying out logical measurements using ancilla qubits.
We have shown that all logical Clifford gates can be performed by braiding.
A direction for future research could be constructing efficient decoders for color codes with twists. 
Another fruitful direction could be to extend the protocols presented for logical measurements to non-CSS quantum codes.

\backmatter


\bmhead{Acknowledgements}
The author would like to thank Prof. Pradeep Sarvepalli and Guillaume Dauphinias for helpful discussions. 

\section*{Declarations}


\begin{itemize}
\item Conflict of interest: The author has no relevant conflicts of interest to disclose.
\end{itemize}



\bigskip





\begin{appendices}
\section{Stabilizer generators as strings}
In this section, we present strings corresponding to a subset of stabilizer generators.
Note that stabilizers do not produce syndromes.
As a result, the strings corresponding to stabilizers are closed strings.
\begin{figure}[htb]
    \centering
    \begin{subfigure}{.45\textwidth}
        \centering
        \includegraphics[scale = .75]{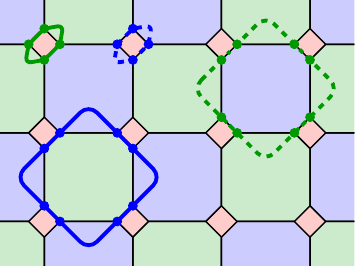}
        \caption{}
        \label{fig:normal-stabilizers-strings}
    \end{subfigure}
    ~
    \begin{subfigure}{.45\textwidth}
        \centering
        \includegraphics[scale = .65]{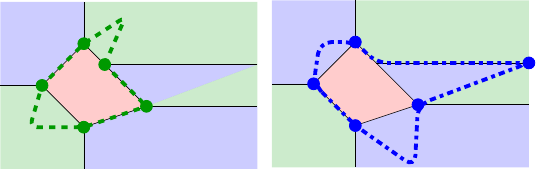}
        \caption{}
        \label{fig:twists-X-Y-stabilizers-strings}
    \end{subfigure}
    \caption{String description for stabilizers: (a) normal faces (b) twists.}
    \label{fig:stabilizer-strings}
\end{figure}
String description for normal stabilizer generators and twists is shown in Fig.~\ref{fig:normal-stabilizers-strings} and Fig.~\ref{fig:twists-X-Y-stabilizers-strings} respectively.

\section{Proof of Lemma~\ref{lm:encoded-qubits-domino}}
\label{sec:proof-encoded-qubits}

Let $\tau = t / 2$ be the number of twist pairs and let $\ell$ be the length of the virtual path (i.e., the path formed by common edges to the half-bricks).
For each twist pair, there are $\ell - 1$ two valent vertices and the total number of two valent vertices is $\tau (\ell - 1 )$.
Since the graph is embedded on two dimensional plane, we have $n + f - e = 2$, where $n$ and $f$ are number of vertices and faces respectively (note that $f$ also includes the domain wall as one face and the unbounded green face).
Counting the degree of each vertex, we get $2e = 3[n - \tau (\ell - 1 )] + 2\tau (\ell - 1 ) = 3n + \tau - \tau \ell$, where $e$ is the number of edges in the lattice.
Using the above result, we get, $f = (n/2) - (\tau / 2) (\ell - 1) + 2 $.
Removing the face counted for the domain wall for every twist pair (the number of such faces is $\tau$), we get $f =  (n/ 2) - (\tau / 2) (\ell + 1) + 2$.
The number of stabilizers defined on normal faces and twists is $2f$.
Including the brick stabilizers (which are $\tau \ell$ in number) and removing dependencies, we the number of independent stabilizers as
\begin{equation}
    s = 2f + \tau \ell - 3 = n - \tau + 1 = n - \frac{t}{2} + 1.
    \label{eqn:num-ind-stabs}
\end{equation}
Therefore, the number of encoded qubits is $k = \frac{t}{2} - 1$, where $t$ is the number of twists in the lattice.

\begin{figure}[htb]
    \centering
    \begin{subfigure}{.45\textwidth}
        \centering
        \includegraphics[scale = .75]{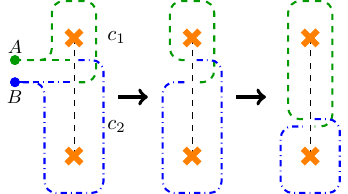}
        \label{fig:combining-majorana-operators-1}
        \subcaption{}
    \end{subfigure}
    ~
    \begin{subfigure}{.45\textwidth}
        \centering
        \includegraphics[scale = .75]{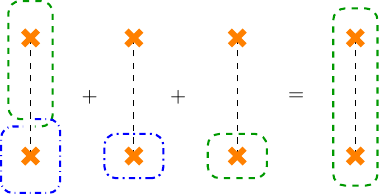}
        \label{fig:combining-majorana-operators-2}
        \subcaption{}
    \end{subfigure}
    \caption{Combining two Majorana operators to get a logical operator.}
    \label{fig:combining-majorana-operators}
\end{figure}

\section{Combining two Majorana operators to get a logical operator}
\label{sec:majorana-to-log-op-string}
The operators $c_m$ are open string operators and therefore do not belong to $ \mathcal{C}( \mathcal{S})$, the centralizer of stabilizer group. 
However, the operator $-ic_mc_{m+1}$ is closed  loop operator and is in $\mathcal{C}(\mathcal{S})$, see Fig.~\ref{fig:combining-majorana-operators}. 
The operator $-ic_mc_{m+1}$ is called Fermionic string operator (as its measurement outcome indicates the presence or absence of Fermions). 
Measurement outcome of this operator gives the total charge of the twists (fusion outcome). 
If the outcome is $+1$, then the twists fuse to vacuum and to fermion $\psi$ if the outcome is $-1$.
\end{appendices}






\end{document}